\newif\ifhyper\IfFileExists{hyperref.sty}{\hypertrue}{\hyperfalse}
\ifhyper\usepackage{hyperref}\fi
\newcommand{\ignore}[1]{}
\renewcommand{\subsubsection}{\@startsection{subsubsection}{3}{0pt}{-12pt}{-5pt}{\normalsize\bf}}
\newenvironment{proof}{\noindent \textbf{Proof:}\nopagebreak[2]}{$\qed$}
\newenvironment{proofof}[1]{\noindent \textbf{Proof of {#1}:}\nopagebreak[2]}{$\qed$}
\newcommand{\qed}{\hfill\rule{7pt}{7pt} \medskip}
\newtheorem{claim}{Claim}
\newtheorem{proposition}[claim]{Proposition}
\newtheorem{lemma}[claim]{Lemma}
\newtheorem{theorem}{Theorem}
\newtheorem{fact}[claim]{Fact}
\newcommand{\dtv}{d_{\mathrm TV}}
\newcommand{\dk}{d_{\mathrm K}}
\newcommand{\R}{{\mathbb R}}
\newcommand{\Z}{{\mathbb Z}}
\newcommand{\E}{{\bf E}}
\newcommand{\eps}{\epsilon}
\newcommand{\littlesum}{\mathop{{\textstyle \sum}}}
\newcommand{\poly}{\mathrm{poly}}
\newcommand{\wh}[1]{{\widehat{#1}}}
\newcommand{\opt}{\mathsf{opt}}
\newcommand{\AND}{{\mathrm{L^{\uparrow}}}}
\newcommand{\ANI}{{\mathrm{L^{\downarrow}}}}
\newcommand{\TND}{{\mathrm{T^{\uparrow}}}}
\newcommand{\TNI}{{\mathrm{T^{\downarrow}}}}
\newcommand{\dis}{{\mathcal{D}}}
\newcommand{\mon}{{\mathcal{M}}}
\newcommand{\A}{{\mathcal{A}}}
\newcommand{\norm}[1]{\left\|#1\right\|}
   \newcommand{\rnote}[1]{\footnote{{\bf [[Rocco: {#1}\bf ]] }}}
   \newcommand{\inote}[1]{\footnote{{\bf [[Ilias: {#1}\bf ]] }}}
\newcommand{\new}[1]{{{#1}}}
\newcommand{\newer}[1]{{{#1}}}
\begin{document}

\title{Learning $k$-Modal Distributions via Testing\footnote{A preliminary version of this work appeared in the {\em Proceedings of the 
Twenty-Third Annual ACM-SIAM Symposium on Discrete Algorithms (SODA 2012).}}}

\author{Constantinos Daskalakis\thanks{Research supported by NSF CAREER award CCF-0953960 and by a
Sloan Foundation Fellowship.}
\\
MIT\\
{\tt costis@csail.mit.edu}\\
\and
Ilias Diakonikolas\thanks{Most of this research was done while the author was at UC Berkeley 
supported by a Simons Postdoctoral Fellowship. Some of this work was done at Columbia University, supported by NSF grant CCF-0728736, and by an Alexander S. Onassis Foundation
Fellowship.}\\
University of Edinburgh\\
{\tt ilias.d@ed.ac.uk}\\
\and
Rocco A. Servedio
\thanks{Supported by NSF grants CNS-0716245, CCF-0915929, and CCF-1115703.}\\
Columbia University\\
{\tt rocco@cs.columbia.edu}}

\setcounter{page}{0}

\maketitle

\thispagestyle{empty}

\begin{abstract}
A $k$-modal probability distribution over the 
\new{discrete} domain $\{1,...,n\}$ is one whose
histogram has at most $k$ ``peaks'' and ``valleys.'' Such distributions are
natural generalizations of monotone ($k=0$) and unimodal ($k=1$)  probability distributions, which  have been intensively studied in probability theory and statistics.

In this paper we consider the problem of \emph{learning} 
\new{(i.e.,performing density estimation of)}
an unknown $k$-modal distribution with respect to the $L_1$ distance.
The learning algorithm is given access to independent samples 
drawn from an unknown $k$-modal distribution $p$, and 
it must output a hypothesis distribution $\widehat{p}$ such that with high
probability the total variation distance between $p$ and $\widehat{p}$ 
is at most $\eps.$  \new{Our main goal is to obtain
\emph{computationally efficient} algorithms for this problem that
use (close to) an information-theoretically optimal number of samples.
}

We give an efficient algorithm for this problem that runs in time $\poly(k,\log(n),1/\eps)$.
For $k \leq \tilde{O}(\new{\log n})$, the number of samples used by our algorithm is very close (within an
$\tilde{O}(\log(1/\eps))$ factor) to being information-theoretically optimal.  Prior to this
work computationally efficient algorithms were known only for the cases $k=0,1$
\cite{Birge:87b,Birge:97}.

A novel feature of our approach is that our learning algorithm crucially 
uses a new algorithm for \emph{property testing of probability distributions} 
as a key subroutine.  
The learning algorithm uses the property tester to efficiently
decompose the $k$-modal distribution into $k$ (near-)monotone distributions, 
which are easier to
learn.

\end{abstract}



\section{Introduction} \label{sec:intro}


This paper considers a natural unsupervised learning problem involving \emph{$k$-modal} distributions
  over the discrete domain $\new{[n]=}\{1,\dots,n\}.$ A distribution is
$k$-modal if the plot of its probability density function (pdf) has at most $k$ ``peaks''
and ``valleys'' (see Section~\ref{ssec:defs} for a precise definition).  Such distributions arise both in theoretical (see e.g., \cite{CKC:83,Kemperman:91,ChanTong:04}) and applied (see e.g., \cite{Murphy:64,ToledoFernandez:90,FPPRD:98}) research; they naturally generalize the simpler classes of monotone ($k=0$) and unimodal ($k=1$) distributions
that have been intensively studied in probability theory and statistics (see the discussion of related work below).

Our main aim in this paper is to give an efficient algorithm for \emph{learning}
an unknown $k$-modal distribution $p$ to total variation distance $\eps$, given access only to independent samples drawn from $p$.  As described below there is an information-theoretic lower bound of
$\Omega(k \log(n/k)/\eps^3)$ samples for this learning problem, so an important goal for us is to
 obtain an algorithm whose sample complexity is as close as possible to this lower bound.  An equally important goal is for our
algorithm to be computationally efficient, i.e., to run in time 
polynomial in the size of its input sample.  
Our main contribution in this paper is a computationally efficient
algorithm that has nearly optimal sample complexity for 
small (but super-constant) values of $k.$

\subsection{Background and \new{relation to previous work}} \label{sec:background}


There is a rich body of work in the statistics and probability literatures 
on estimating distributions under various kinds of ``shape'' or ``order'' 
restrictions.  In particular, many researchers have studied the risk of 
different estimators for monotone \new{($k=0$)} 
and unimodal \new{($k=1$)} distributions; see for 
example the works of \cite{PrakasaRao:69,Wegman:70,Groeneboom:85,Birge:87, 
Birge:87b, Birge:97}, among many others.
These and related papers from the probability/statistics literature 
mostly deal with information-theoretic
upper and lower bounds on the sample complexity of learning 
monotone and unimodal distributions.  
\new{In contrast, a central goal of the current work is to 
obtain \emph{computationally efficient} learning algorithms for
larger values of $k$.} 

It should be noted that some of the works cited above 
do give efficient algorithms for the cases $k=0$ and $k=1$; 
in particular we mention the results of Birg\'{e} \cite{Birge:87b,Birge:97}, 
which give computationally efficient $O(\log(n)/\eps^3)$-sample algorithms 
for learning unknown monotone or unimodal distributions over $[n]$ 
respectively.
(Birg\'{e}~\cite{Birge:87} also showed that this sample
complexity is asymptotically optimal, as we discuss below; we describe 
the algorithm of \cite{Birge:87b} in more detail
in Section~\ref{ssec:tools}, and indeed use it as an ingredient of our 
approach throughout this paper.)  
However, for these relatively simple $k=0,1$ classes of distributions the 
main challenge is in developing sample-efficient estimators, and the 
algorithmic aspects are typically rather straightforward (as is the case 
in \cite{Birge:87b}).  In contrast, much more challenging and interesting 
algorithmic issues arise for the general values of $k$ which we consider here.

\subsection{Our Results}

Our main result is a highly efficient algorithm for learning an unknown
$k$-modal distribution over $[n]$:

\begin{theorem} \label{thm:main}
Let $p$ be any unknown $k$-modal distribution over $[n]$.  There is an algorithm that uses\footnote{We
write $\tilde{O}(\cdot)$ to hide factors which are poly-logarithmic
in the argument to $\tilde{O}(\cdot)$;
thus for example $\tilde{O}(a \log b)$ denotes a quantity
which is $O((a \log b) \cdot  (\log (a \log b))^c)$ for some
absolute constant $c$.}
\[
\left ( {\frac {k\log(n/k)}{\eps^3}}  +   \frac{\new{k^2}}{\eps^3} \cdot \log {\frac k \eps} \cdot
\log\log {\frac k \eps} \right )
\cdot \tilde{O}(\log(1/\delta))
\]
samples from $p$, runs for $\poly(k, \log n, 1/\eps,\log(1/\delta))$
bit operations, and with probability $1-\delta$ outputs a (succinct description of a) hypothesis
distribution $\widehat{p}$ over $[n]$ such that the total variation distance between $p$ and $\widehat{p}$ is
at most $\eps.$
\end{theorem}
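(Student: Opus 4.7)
The plan is to reduce learning an arbitrary $k$-modal distribution $p$ to learning at most $k+1$ monotone pieces and then invoking Birgé's monotone-distribution estimator on each piece. Concretely, any $k$-modal $p$ admits a partition of $[n]$ into intervals $I_1,\ldots,I_m$ with $m\leq k+1$ such that the conditional distribution $p_j := p|_{I_j}$ is monotone. If such a partition were known, we would draw a sample of size roughly $N=\tilde O(k\log(n/k)/\eps^3)$ from $p$, apply Birgé to each piece using the $\approx w_j N$ samples that land in $I_j$ (where $w_j=p(I_j)$), and combine the $\hat p_j$ with empirical weight estimates $\hat w_j$. The per-piece Birgé tolerance $\eps_j \approx (\log|I_j|/(w_j N))^{1/3}$ can be chosen so that $\sum_j w_j\eps_j \le \eps$; a Hölder-type calculation $\sum_j w_j^{2/3}(\log|I_j|)^{1/3}\le ((k{+}1)\log(n/(k{+}1)))^{1/3}$ shows that $N=\tilde O(k\log(n/k)/\eps^3)$ suffices, matching the first term of the claimed bound. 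Pieces with $w_j\ll \eps/k$ can simply be ignored, costing at most $\eps$ in total variation. Empirical estimation of the $\hat w_j$'s contributes $O(\eps)$ via standard concentration.

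The central task is therefore to \emph{discover} the partition from samples, which is where the property testing subroutine enters. The plan is to design a tester $T$ that, given sample access to $p$ conditioned on an interval $I$, distinguishes ``$p|_I$ is monotone'' from ``$p|_I$ is $\Omega(\eps/k)$-far from every monotone distribution on $I$,'' using a number of samples $\mathrm{poly}(k/\eps)\cdot\polylog(k/\eps)$ that is \emph{independent of $|I|$} -- this is the source of the second term $k^3/\eps^3\cdot\log(k/\eps)\log\log(k/\eps)$. With such a tester, we perform a left-to-right sweep: starting at the current left endpoint, we grow a candidate interval (by doubling, then binary searching to refine) until $T$ reports non-monotonicity; we then freeze the interval, run Birgé on the samples that fell into it, and start a fresh interval from the next point. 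Since $p$ is $k$-modal, the sweep produces $O(k)$ frozen intervals, so the cumulative testing cost is $O(k)$ times the per-test cost, yielding the second additive term.

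The key obstacle is designing the tester $T$ with sample complexity independent of $|I|$. The intended route is to leverage Birgé's oblivious decomposition: any monotone distribution on $\{1,\ldots,\ell\}$ is $O(\eps)$-close in TV to a distribution that is flat on each of $O(\log(\ell)/\eps)$ explicit dyadic-style subintervals. This allows the tester to ``coarsen'' $p|_I$ to a distribution over $O(\log|I|/\eps)$ bins and to test monotonicity of the induced distribution, which collapses the dependence on $|I|$ once the bin structure is fixed, leaving an $\ell_1$-type identity test at scale $\eps/k$. A complication is that we do not know \emph{a priori} whether each piece is non-decreasing or non-increasing, so $T$ is actually run twice (once per orientation) and each Birgé application is preceded by choosing the correct orientation.

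A secondary obstacle is robustness of the sweep to tester errors and to the fact that the discovered breakpoints need not coincide with true modes of $p$. The sweep may declare a boundary slightly before or after a true extremum, producing a piece on which $p$ is not exactly monotone but only $O(\eps/k)$-close to monotone; Birgé's learner, being robust to such agnostic slack, still returns a hypothesis within $O(\eps/k)$ of $p_j$, and the $O(k)$ such errors sum to $O(\eps)$. Handling the interface between the tester's two-sided promise and Birgé's learner -- and ensuring that at most $O(k)$ rounds of the sweep are needed even in the presence of tester failures (amplified by repetition and union-bounded over the sweep) -- is the delicate quantitative part of the argument, but given a tester with the stated complexity, the aggregate bound of Theorem~\ref{thm:main} follows from these two ingredients combined with standard $\tilde O(\log(1/\delta))$ confidence boosting.
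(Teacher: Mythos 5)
Your high-level architecture is the same as the paper's: form the partition by sweeping a monotonicity tester across the domain, feed each (near-)monotone piece to Birg\'{e}'s semi-agnostic learner, bound $\sum_j w_j(\log|I_j|/(w_jN))^{1/3}$ by H\"older, discard light pieces, and boost confidence by a tournament. The per-piece error analysis and the $k\log(n/k)/\eps^3$ accounting are essentially identical to the paper's Lemma~\ref{lemma:exp-dtv}.

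The genuine gap is in the one component that is actually new in this result: the tester $T$ with sample complexity independent of $|I|$. Your proposed construction --- coarsen $p|_I$ via Birg\'{e}'s oblivious decomposition into $O(\log|I|/\eps)$ bins and run an ``$\ell_1$-type identity test'' on the reduced distribution --- does not deliver what you claim, for two reasons. First, it is not independent of $|I|$: any identity or monotonicity test over a support of size $m=O(\log|I|/\eps)$ costs at least $\Omega(\sqrt{m})$ samples (and learning the reduced distribution to $\ell_1$-accuracy $\eps/k$ costs $\Theta(mk^2/\eps^2)$), so a $\log|I|$ dependence survives; run $O(k/\eps)$ times over the sweep this reproduces the $k\log n/\eps^4$-type bound of the ``moderately naive'' algorithm rather than the claimed $\poly(k/\eps)$ additive term. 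Second, and more fundamentally, the reduction is not sound: a $k$-modal distribution can be $\Omega(1)$-far from monotone while its reduced distribution over the bins is exactly monotone (e.g., all mass sitting as a spike strictly inside the last bin), because the guarantee that flattening preserves the distribution holds only when $p|_I$ \emph{is} monotone. The paper's tester is built differently: it checks, for all triples $a\le b<c$ with endpoints in the sample, whether the average of $q$ on $[a,b]$ exceeds that on $[b+1,c]$ by more than $(\tau/4k)\left(\frac{1}{b-a+1}+\frac{1}{c-b}\right)$, and its soundness rests on a structural lemma (proved by a Myerson-style ironing argument) stating that a \emph{$k$-modal} distribution $\tau$-far from non-decreasing must exhibit such a violation of magnitude $\tau/(2k)$. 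This is where $k$-modality is exploited; without it, monotonicity testing provably requires $\Omega(\sqrt{n})$ samples, so no tester of the kind you posit exists for general distributions. You would need to supply this structural lemma (or an equivalent) to close the argument.
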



As alluded to earlier, Birg\'{e}~\cite{Birge:87} gave a sample complexity lower bound
for learning monotone distributions. The lower bound in~\cite{Birge:87} is stated for continuous
distributions but the arguments are easily adapted to the discrete case; ~\cite{Birge:87} shows that
(for $\eps \geq 1/n^{\Omega(1)}$)\footnote{For $\eps$ sufficiently small the generic
upper bound of \new{Fact~\ref{thm:folklore}}, which says that any distribution over $[n]$
can be learned to variation distance $\eps$ using $O(n/\eps^2)$ samples, provides a
better bound.} any algorithm for learning an unknown monotone distribution over $[n]$ to
total variation distance $\eps$ must use $\Omega(\log(n)/\eps^3)$ samples.
By a simple construction which concatenates $k$ copies of the monotone lower bound construction over intervals
of length $n/k$, using the monotone lower bound it is possible to show:


\begin{proposition} \label{obs:lowerbound}
Any algorithm for learning an unknown $k$-modal distribution over $[n]$ to variation distance
$\eps$ (for $\eps \geq 1/n^{\Omega(1)}$) must use $\Omega(k \log(n/k)/\eps^3)$ samples.
\end{proposition}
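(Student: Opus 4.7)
The plan is to reduce to Birg\'{e}'s $\Omega(\log m / \eps^3)$ sample-complexity lower bound for learning monotone distributions on $[m]$ (already invoked above). Partition $[n]$ into $k$ consecutive intervals $I_1, \ldots, I_k$ of length $m = n/k$ each, and consider $k$-modal distributions of the form $p = \frac{1}{k}\sum_{j=1}^{k} q_j$, where each $q_j$ is a monotone distribution supported on $I_j$, drawn from the adversarial family underlying Birg\'{e}'s lower bound. By alternating the direction of monotonicity across intervals, or equivalently by pairing intervals into ``tent''-shaped bumps, the concatenation has at most $k$ modes up to a constant factor (which is absorbed into the $\Omega$), so $p$ lies in the target class.

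Each i.i.d.\ sample from $p$ lies in $I_j$ with probability exactly $1/k$, so $N$ samples from $p$ produce counts $N_1, \ldots, N_k$ with $\sum_j N_j = N$ and $\E[N_j] = N/k$, and, conditioned on these counts, the samples falling in $I_j$ are i.i.d.\ draws from $q_j$. A short calculation (assuming WLOG that the hypothesis $\hat p$ assigns mass exactly $1/k$ to each $I_j$, which is legitimate up to a factor of two in $\eps$) yields $\dtv(\hat p, p) = \frac{1}{k}\sum_j \dtv(\hat q_j, q_j)$, where $\hat q_j$ is the conditional of $\hat p$ on $I_j$. Hence an $\eps$-accurate hypothesis forces the per-interval errors $e_j := \dtv(\hat q_j, q_j)$ to satisfy $\sum_j e_j \leq k\eps$.

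Birg\'{e}'s lower bound, applied on each interval with $q_j$ chosen adversarially from the hard family, implies $N_j = \Omega(\log m / e_j^3)$. Summing over $j$ and invoking Jensen's inequality for the convex function $x \mapsto 1/x^3$, subject to $\sum_j e_j \leq k\eps$ with $k$ terms, gives $\sum_j 1/e_j^3 \geq k/\eps^3$, whence $N = \sum_j N_j \geq \Omega(k \log(n/k)/\eps^3)$, which is the desired bound.

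The main obstacle is that Birg\'{e}'s bound is a \emph{minimax} statement, so one must couple the adversarial choices of the $q_j$'s across intervals to the algorithm, rather than picking them after inspecting the algorithm's internal randomness on each interval in isolation. This is handled by drawing each $q_j$ independently from Birg\'{e}'s hard prior, applying a Fano-type (or Assouad-style) argument against the resulting product prior, and then extracting a fixed worst-case concatenation via Yao's minimax principle. This information-theoretic step is where the real work lies, though the techniques involved are by now standard.
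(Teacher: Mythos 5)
Your construction --- concatenating $k$ copies of Birg\'{e}'s hard monotone instance over consecutive length-$n/k$ intervals, each carrying mass $1/k$, decomposing $\dtv(\hat p,p)$ into the average of the per-interval errors, and then invoking the monotone lower bound on each piece --- is exactly the approach the paper takes (the paper states it in a single sentence, noting only that one "concatenates $k$ copies of the monotone lower bound construction over intervals of length $n/k$"). Your extra care about the Jensen step and about coupling the adversarial choices via a product prior and Yao's principle are refinements of that same argument rather than a different route, so the proposal is correct and matches the paper.
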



Thus our learning algorithm is nearly optimal in its sample complexity; more precisely,
for $k \leq \tilde{O}(\new{\log n})$ (and $\eps$ as bounded above), our sample complexity in Theorem~\ref{thm:main} is asymptotically optimal up to a factor of $\tilde{O}(\log(1/\eps))$.  Since each draw
from a distribution over $[n]$ is a $\log(n)$-bit string, Proposition~\ref{obs:lowerbound}
implies that the running time of our algorithm is optimal up to polynomial factors.  
\new{As far as we are aware,} prior to this work no learning
algorithm for $k$-modal distributions was known that simultaneously had
$\poly(k,\log n)$ sample complexity \new{and even running time $p(n)$ for a 
fixed polynomial $p(n)$ (where the exponent does not depend on
$k$).
}

\subsection{Our Approach}

As mentioned in Section~\ref{sec:background} Birg\'{e} gave a highly efficient algorithm for learning a
\emph{monotone} distribution in \cite{Birge:87b}.  Since a $k$-modal distribution is simply
a concatenation of $k+1$ monotone distributions (first non-increasing, then non-decreasing, then
non-increasing, etc.),
it is natural to try to use Birg\'{e}'s
algorithm as a component of an algorithm for learning $k$-modal distributions, and indeed this is what we do.

The most naive way to use Birg\'{e}'s algorithm would be to guess all possible ${n \choose k}$ locations of the $k$ ``modes'' of $p$.  While such an approach can be shown to have good sample complexity, the resulting $\Omega(n^k)$ running time is grossly inefficient.  A ``moderately naive'' approach, which we analyze in Section~\ref{ssec:kmodal-learn-simple}, is to partition $[n]$ into roughly $k/\eps$ intervals each of weight roughly $\eps/k$, and run Birg\'{e}'s algorithm separately on each such interval.  Since the target distribution is $k$-modal, at most $k$ of the intervals can be non-monotone; Birg\'{e}'s algorithm can be used to obtain an $\eps$-accurate hypothesis on each monotone interval, and even if it fails badly on the (at most) $k$ non-monotone intervals, the resulting total contribution towards the overall error from those failures is at most $O(\eps).$  This approach is much more efficient than the totally naive approach, giving
running time polynomial in $k$, $\log n$, and $1/\eps$, but its sample complexity turns out to be polynomially worse
than the $O(k\log(n)/\eps^3)$ that we are shooting for.  (Roughly speaking, this is because the approach involves running Birg\'{e}'s $O(\log(n)/\eps^3)$-sample algorithm 
$\new{\Omega}(k/\eps)$ times, so it uses at least $k\log(n)/\eps^4$ samples.)

Our main learning result is achieved by augmenting the ``moderately naive'' 
algorithm sketched above with a new \emph{property testing} algorithm.  
\new{Unlike a learning algorithm, a property testing algorithm
for probability distributions need not output a high-accuracy hypothesis;
instead, it has the more modest goal of successfully (with high probability)
distinguishing between probability distributions that have a given property 
of interest, versus distributions that are far (in total variation distance)
from every distribution that has the property.  See \cite{GGR98,
Ron:10FNTTCS, PropertyTestingICS} for broad overviews of property testing.}

We give a property testing algorithm for the following problem:  given
samples \new{from a distribution $p$ over $[n]$ which is promised to be
$k$-modal}, output ``yes'' (with high probability) if $p$ is 
\emph{monotone} and ``no'' (with high probability) if $p$ is
$\eps$-far \new{in total variation distance} 
from every monotone distribution.  
Crucially, our testing algorithm uses $O(\new{k}/\eps^2)$ samples 
\emph{independent of $n$}
for this problem.  Roughly speaking, by using this algorithm $O(k/\eps)$ 
times we are able to identify
$k+1$ intervals that (i) collectively contain almost all of $p$'s mass, 
and (ii) are each (close to) monotone and thus can be handled using 
Birg\'{e}'s algorithm.  Thus the overall sample complexity of our approach is 
(roughly) $\new{O(k^2/\eps^3)}$ (for the $O(k/\eps)$
runs of the tester) plus $O(k\log(n)/\eps^3)$ (for the $k$ runs 
of Birg\'{e}'s algorithm), which gives Theorem~\ref{thm:main} and is very 
close to optimal for $k$ not too large.

\subsection{Discussion}

Our learning algorithm highlights a novel way that property testing algorithms can be useful
for learning.  Much research has been done on understanding the relation between property testing
algorithms and learning algorithms, see e.g., \cite{GGR98,KR00} and the lengthy survey \cite{Ron:08testlearn}.  As Goldreich has noted \cite{Goldreich:11web}, an often-invoked motivation for
property testing is that (inexpensive) testing algorithms can be used as a ``preliminary diagnostic''
to determine whether it is appropriate to run a (more expensive) learning algorithm.  In contrast, in this work we are using property testing rather differently, as an inexpensive way of decomposing a ``complex'' object (a $k$-modal
distribution) which we do not \emph{a priori} know how to learn, into a collection of ``simpler'' objects (monotone or near-monotone distributions) which can be learned using existing techniques.  We are not aware of prior learning algorithms that successfully use property testers in this way;
we believe that this high-level
approach to designing learning algorithms, by using property testers to decompose ``complex'' objects into simpler objects that can be efficiently learned, may find future applications elsewhere.

\section{Preliminaries} \label{sec:prelims}

\subsection{Notation and Problem Statement} \label{ssec:defs}

For $n \in \Z_+$, denote by $[n]$ the set $\{1, \ldots, n\}$;
for $i,j  \in \Z_+$, $i \leq j$, denote by $[i,j]$ the set $\{i, i+1, \ldots, j\}$.
We write $v(i)$ to denote the $i$-th element of vector $v \in \R^n$.
For $v=(v(1),\dots,v(n)) \in \R^n$ denote by $\|v\|_1 = \littlesum_{i=1}^n |v(i)|$ its $L_1$-norm.

We consider discrete probability distributions over $[n]$, which are
functions $p: [n] \to [0,1]$ such that $\littlesum_{i=1}^n p(i)=1$.
For $S \subseteq [n]$ we write $p(S)$ to denote $\littlesum_{i \in S} p(i)$.
 For $S \subseteq [n]$, we write $p_S$ to denote the {\em conditional distribution}
over $S$ that is induced by $p.$ We use the  notation $P$ for the
{\em cumulative distribution function (cdf)} corresponding to $p$, i.e.,
$P: [n] \to [0,1]$ is defined by $P(j) = \littlesum_{i=1}^j p(i)$.

A distribution $p$ over $[n]$ is non-increasing (resp. non-decreasing)
if $p(i+1) \leq p(i)$ (resp. $p(i+1) \geq p(i)$), for all $i \in [n-1]$;
$p$ is \emph{monotone} if it is either non-increasing or non-decreasing.
We call a nonempty interval $I=[a,b] \subseteq [2, n-1]$ a \emph{max-interval of $p$} if
$p(i) = c$ for all $i \in I$ and $\max\{p(a-1), p(b+1)\} < c$; in this case,
we say that the point $a$ is a \emph{left max point} of $p$.
Analogously, a \emph{min-interval of $p$} is an interval
 $I=[a,b] \subseteq [2, n-1]$ with $p(i) = c$ for all $i \in I$ and $\min \{p(a-1), p(b+1)\} > c$;
 the point $a$ is called a \emph{left min point} of $p.$
If $I=[a,b]$ is either a max-interval or a min-interval (it cannot be both)
we say that $I$ is an \emph{extreme-interval} of $p$, and $a$ is called a \emph{left extreme point} of $p.$
Note that any distribution uniquely defines a collection of extreme-intervals (hence, left extreme points).
We say that $p$ is \emph{$k$-modal} if it has at most $k$ extreme-intervals.
\new{We write $\dis_n$ (resp. $\mon^k_n$) to denote the set of all distributions
(resp. $k$-modal distributions) over $[n]$.}

Let $p, q$ be distributions over $[n]$ with corresponding cdfs $P, Q$.
The {\em total variation distance} between $p$ and $q$ is
$\dtv (p,q) := \max_{S \subseteq [n]} |p(S) - q(S)|  = (1/2)\cdot \|p-q\|_1.$
The {\em Kolmogorov distance} between $p$ and $q$ is defined as
$\dk(p,q):= \max_{j \in [n]} \left| P(j) - Q(j) \right|.$ Note that $\dk(p,q) \le \dtv (p,q).$

\new{
We will also need a more general distance measure that captures the above two metrics as special cases.
Fix a family of subsets $\mathcal{A}$ over $[n]$. 
We define the {\em $\mathcal{A}$--distance} between $p$ and $q$ by $\norm {p-q}_{\mathcal{A}}
:= \max_{A\in \mathcal{A}} |p(A) - q(A)|$. (Note that if $\mathcal{A} = 2^{[n]}$, the powerset of $[n]$, then  the 
$\mathcal{A}$--distance is identified with the total variation distance, while when $\mathcal{A} = \{ [1, j], j \in [n]\}$ 
it is identified with the Kolmogorov distance.)
Also recall that the \emph{VC--dimension} of $\mathcal A$ is the maximum size 
of a subset $X\subseteq [n]$ that is shattered by $\mathcal A$ (a set $X$ is shattered by
$\mathcal A$ if for every $Y \subseteq X$ some $A\in\mathcal A$ satisfies $A\cap X = Y$).
}

\medskip

{\noindent \textbf{Learning $k$-modal Distributions}.}
Given independent samples from an unknown
$k$-modal distribution $p \in \mon^k_n$
and $\eps>0$, the goal is to output a hypothesis
distribution $h$ such that with probability
$1-\delta$ we have $\dtv(p,h) \leq \eps$.
We say that such an algorithm $\A$ {\em learns $p$ to accuracy $\eps$
and confidence $\delta$}.  \ignore{
We say that an algorithm is {\em proper}, if in addition we have
that $h \in \mon^k_n$.\rnote{Are we going to get a proper learning algorithm, or should I delete
this sentence?} \inote{Yes, I will add this later. Thanks.}}The
parameters of interest are the number of samples and the running time
required by the algorithm.

\subsection{Basic Tools} \label{ssec:tools}

We recall \new{some} useful tools from probability theory.

\smallskip

\ignore{Our first tool says that $O(1/\eps^2)$ samples suffice to learn any distribution
within error $\eps$ with respect to the \emph{Kolmogorov distance}.
This fundamental fact is known as the \emph{ Dvoretzky-Kiefer-Wolfowitz
(DKW) inequality} (\cite{DKW56}).}
\ignore{The DKW inequality states that for $m=\Omega((1/\eps^2)\cdot \ln(1/\delta))$,
with probability $1-\delta$ the {empirical distribution} $\wh{p}_m$ will be
$\eps$-close to $p$ in Kolmogorov distance.
This sample bound is asymptotically optimal and independent of the support size.

\begin{theorem}[\cite{DKW56,Massart90}] \label{thm:DKW}
For all $\eps>0$, it holds: $\Pr [ \dk(p, \wh{p}_m) > \eps ] \leq 2e^{-2m\eps^2}.$
\end{theorem}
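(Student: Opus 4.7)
The plan is to establish the inequality by reducing the supremum over $j \in [n]$ to a statement about a highly correlated family of Bernoulli sums, and then to import the sharp constant from~\cite{Massart90}. First I would fix $j \in [n]$ and observe that $m \cdot \wh{P}_m(j)$ is distributed as $\mathrm{Bin}(m, P(j))$, so Hoeffding's inequality immediately gives $\Pr[|P(j) - \wh{P}_m(j)| > \eps] \le 2 e^{-2 m \eps^2}$ for any single index. A naive union bound over $j$ would then introduce a spurious factor of $n$, and it is precisely this factor that the DKW inequality removes.

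To eliminate the $n$, I would split the event $\{\dk(p,\wh{p}_m) > \eps\}$ into its two one-sided parts, $\sup_j(\wh{P}_m(j) - P(j)) > \eps$ and $\sup_j(P(j) - \wh{P}_m(j)) > \eps$, and aim to bound each by $e^{-2m\eps^2}$. The key structural observation is that the family $\{P(j) - \wh{P}_m(j)\}_{j \in [n]}$ is highly dependent: consecutive partial sums differ by a single Bernoulli increment, so the supremum concentrates much more tightly than independent copies would predict. The standard way to exploit this is to pass to the continuous setting via the probability-integral transform, after which the samples may be taken uniform on $[0,1]$ and the empirical CDF becomes a rescaled uniform empirical process whose supremum is distribution-free; in particular, the discrete bound for arbitrary $p$ on $[n]$ follows at once from the uniform case by monotonicity of the CDF.

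The main obstacle is precisely the sharp one-sided bound $\Pr[\sup_j(\wh{P}_m(j) - P(j)) > \eps] \le e^{-2 m \eps^2}$ with constant $1$ in front. This is Massart's refinement of the original DKW argument and is proved by applying an exponential martingale / reflection-type argument to the uniform empirical process; I would invoke~\cite{Massart90} for this step rather than reproduce the delicate calculation. Combining the two one-sided tails by a final union bound yields $\Pr[\dk(p,\wh{p}_m) > \eps] \le 2e^{-2m\eps^2}$, as claimed.
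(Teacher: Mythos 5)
The paper states Theorem~\ref{thm:DKW} as a black-box citation of \cite{DKW56,Massart90} and offers no proof, so there is nothing to compare line by line. Your outline is a correct reconstruction of the standard argument: the reduction of the discrete case to the uniform empirical process via the quantile transform is valid, the factor of $2$ correctly comes from a union bound over the two one-sided tails, and the genuinely hard content --- the one-sided bound $\Pr[\sup_j(\wh{P}_m(j)-P(j))>\eps]\le e^{-2m\eps^2}$ with leading constant $1$ --- is exactly Massart's contribution and is appropriately delegated to \cite{Massart90}, just as the paper does.
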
 

}

\new{

\medskip

\noindent {\bf The VC inequality.}
Given $m$ independent samples $s_1,\dots,s_m$, drawn from $p:[n] \to [0,1],$
the {\em empirical distribution} $\wh{p}_m : [n] \to [0,1]$ is defined as
follows: for all $i \in [n]$, $\wh{p}_m(i) = |\{j \in [m] \mid s_j=i\}| / m$.
Fix a family of subsets $\mathcal{A}$ over $[n]$ of VC--dimension $d$. 
The {\em VC inequality} states that for $m=\Omega(d/\eps^2)$,
with probability $9/10$ the {empirical distribution} $\wh{p}_m$ will be
$\eps$-close to $p$ in $\mathcal{A}$-distance.
This sample bound is asymptotically optimal.

\begin{theorem}[VC inequality, {\cite[p.31]{DL:01}}]
\label{thm:vc-inequality}
Let $\widehat{p}_m$ be an empirical distribution of $m$ samples from $p$.
Let $\mathcal A$ be a family of subsets of VC--dimension $d$.
Then
$$ \E \left [ \norm{p - \widehat{p}_m}_{\mathcal A} \right] \leq O(\sqrt{d/m}) .$$
\end{theorem}

\noindent {\bf Uniform convergence.}
We will also use the following uniform convergence bound:
\begin{theorem}[{\cite[p17]{DL:01}}]
\label{thm:bdd-diff}
Let $\mathcal A$ be a family of subsets over $[n]$, and $\widehat{p}_m$ be an
empirical distribution of $m$ samples from $p$.
Let $X$ be the random variable $\norm{p - \widehat{p}_{m}}_{\mathcal A}$.
Then we have
$$ \Pr\left[ X - \E[X] > \eta \right] \leq e^{-2m\eta^2}.$$
\end{theorem}

}

Our second tool, due to Birg\'{e}~\cite{Birge:87b},
provides a sample-optimal and computationally efficient algorithm
to learn monotone distributions to $\eps$-accuracy in total variation distance.
Before we state the relevant theorem, we need a definition.
We say that a distribution $p$ is $\delta$-close to being non-increasing (resp. non-decreasing)
if there exists a non-increasing (resp. non-decreasing) distribution $q$ such that
$\dtv(p,q)\leq \delta$. We are now ready to state  Birg\'{e}'s result:

\begin{theorem}[\cite{Birge:87b}, Theorem 1] \label{thm:birge-monotone}
(semi-agnostic learner) \label{thm:agnostic-monotone-learner}
There is an algorithm $\ANI$ with the following performance guarantee: Given
$m$ independent samples from a distribution $p$  over $[n]$ which is $\opt$-close to being non-increasing,
$\ANI$ performs $\tilde{O}( m \cdot \log n + m^{1/3} \cdot (\log n)^{5/3})$
bit operations
and outputs a (succinct description of a) hypothesis distribution $\widetilde{p}$ over $[n]$ that satisfies
$$\E[\dtv(\widetilde{p},p)] \leq 2 \cdot \opt + O\Big(\big(\log n/(m+1)\big)^{1/3}\Big).$$
The aforementioned algorithm partitions the domain $[n]$
in $O(m^{1/3} \cdot (\log n)^{2/3})$ intervals and outputs a hypothesis distribution
that is uniform within each of these intervals.
\end{theorem}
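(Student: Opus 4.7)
The plan is to follow Birg\'{e}'s oblivious-decomposition strategy. First, independently of the sample, I would partition $[n]$ into a sequence of consecutive intervals $I_1,\dots,I_\ell$ whose lengths grow geometrically: $|I_j| \approx \lceil (1+\gamma)^j \rceil$ for a parameter $\gamma > 0$ to be optimized, stopping when the union covers $[n]$. Since the lengths grow geometrically, one gets $\ell = O(\gamma^{-1} \log n)$ intervals. Given the sample, the hypothesis $\widetilde{p}$ will simply be the empirical distribution \emph{flattened} on this partition: on each $I_j$, $\widetilde{p}$ is uniform with total mass equal to the fraction of samples that landed in $I_j$.

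Next I would bound the error in two pieces via the triangle inequality. Let $p^f$ denote the true flattening of $p$ on the partition (so $p^f$ is uniform on each $I_j$ with mass $p(I_j)$), and analogously define $q^f$ for the nearest non-increasing distribution $q$, where $\dtv(p,q)\leq \opt$. The key structural observation is that if $q$ is non-increasing and $\gamma$ is small, then $\dtv(q,q^f) = O(\gamma)$: on the $j$th interval the ratio of max to min values of $q$ is at most the ratio of densities of the uniform on $I_{j-1}$ and $I_{j+1}$, so the variation of $q$ within $I_j$ is controlled by a factor depending only on $\gamma$, and summing contributions yields $O(\gamma)$. Because flattening only collapses mass within intervals, it is an $L_1$-contraction: $\dtv(p^f,q^f) \leq \dtv(p,q) \leq \opt$. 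Combining,
\[
\dtv(p,p^f) \;\leq\; \dtv(p,q) + \dtv(q,q^f) + \dtv(q^f,p^f) \;\leq\; 2\opt + O(\gamma).
\]

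It remains to bound $\E[\dtv(\widetilde{p},p^f)]$. Since both distributions are uniform on each $I_j$, this equals $\tfrac{1}{2}\sum_j \E|\widehat{w}_j - p(I_j)|$, where $\widehat{w}_j$ is the empirical fraction in $I_j$. Each term is at most $\sqrt{p(I_j)(1-p(I_j))/m} \leq \sqrt{p(I_j)/m}$, and by Cauchy-Schwarz,
\[
\littlesum_j \sqrt{p(I_j)/m} \;\leq\; \sqrt{\ell/m} \;=\; O\bigl(\sqrt{\log n/(\gamma m)}\bigr).
\]
Balancing the two error sources by taking $\gamma \asymp (\log n / m)^{1/3}$ makes $\ell = O(m^{1/3}(\log n)^{2/3})$ and gives total expected error $2\opt + O\bigl((\log n/(m+1))^{1/3}\bigr)$, as claimed. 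The running time follows from sorting the $m$ samples and binary-searching each into one of the $\ell$ intervals.

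The main technical obstacle is the structural flattening bound $\dtv(q,q^f) = O(\gamma)$ for non-increasing $q$: it must be proved tightly, since any weaker dependence on $\gamma$ would throw off the $(\log n/m)^{1/3}$ rate after balancing. The rest of the argument, namely the semi-agnostic extension via the $L_1$-contraction of flattening and the Cauchy-Schwarz bound on the empirical error, is then essentially routine.
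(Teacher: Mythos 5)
Your proposal is correct and follows essentially the same route as the paper's Appendix~\ref{sec:birge-agnostic}: Birg\'{e}'s oblivious geometric decomposition into $\ell = O(m^{1/3}(\log n)^{2/3})$ intervals, the $O((\log n/(m+1))^{1/3})$ flattening bound for monotone distributions, the $O(\sqrt{\ell/m})$ statistical error for the reduced distribution, and the semi-agnostic extension via the triangle inequality combined with the fact that flattening is an $L_1$-contraction. The only difference is that you sketch proofs of the structural flattening bound and the $\sqrt{\ell/m}$ bound where the paper simply cites Birg\'{e} and a folklore theorem; your claim $\dtv(q,q^f)=O(\gamma)$ is indeed the one point requiring the careful telescoping argument across consecutive intervals that you correctly flag as the technical crux.
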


By taking $m = \Omega (\log n / \eps^3)$, one obtains a hypothesis such that
$\E[\dtv(\widetilde{p},p)] \leq 2 \cdot \opt + \eps.$
We stress that Birg\'{e}'s algorithm for learning non-increasing distributions~\cite{Birge:87b}
is in fact ``semi-agnostic,'' in the sense that it also learns distributions that are close
to being non-increasing; this robustness will be crucial for us later (since in our final
algorithm we will use Birg\'{e}'s algorithm on distributions identified
by our tester, that are close to monotone but not necessarily perfectly
monotone). This semi-agnostic property is not explicitly stated in~\cite{Birge:87b}
but it can be shown to follow easily from his results. We show how
the semi-agnostic property follows from Birg\'{e}'s results in
Appendix~\ref{sec:birge-agnostic}.
Let $\AND$ denote the corresponding semi-agnostic algorithm for learning non-decreasing distributions.

\smallskip

Our final tool is a routine to do \emph{hypothesis testing}, i.e., to select a
high-accuracy hypothesis distribution from a collection of hypothesis distributions one
of which has high accuracy.  The need for such a routine arises in several places; in
some cases we know that a distribution is monotone, but do not know whether it is
non-increasing or non-decreasing. In this case, we can run both algorithms $\AND$ and
$\ANI$ and then choose a good hypothesis using hypothesis testing.  Another need for
hypothesis testing is to ``boost confidence'' that a learning algorithm generates a
high-accuracy hypothesis.  Our initial version of the algorithm for
Theorem~\ref{thm:main} generates an $\eps$-accurate hypothesis with probability at least
$9/10$; by running it $O(\log(1/\delta))$ times using a hypothesis testing
routine,
it is possible to identify an $O(\eps)$-accurate hypothesis with probability $1-\delta.$
Routines of the sort that we require have been given in e.g., \cite{DL:01} and
\cite{DDS12stoclong}; we use the following theorem from~\cite{DDS12stoclong}:


\begin{theorem} \label{thm:choosehypothesis}
There is an algorithm {\tt Choose-Hypothesis}$^p({h}_1,{h}_2,\eps',\delta')$ which is given
\new{sample} access to $p$, two hypothesis distributions $h_1,h_2$ for $p$,  an accuracy parameter
$\eps'$, and a confidence parameter $\delta'.$  It makes $m=O(\log(1/\delta')/\eps'^2)$ draws
from $p$ and returns a hypothesis $h \in \{h_1,h_2\}.$  If one of $h_1,h_2$ has
$\dtv(h_i,p) \leq \eps'$ then with probability $1-\delta'$ the hypothesis $h$ that
{\tt Choose-Hypothesis} returns has $\dtv(h,p) \leq 6 \eps'.$
\end{theorem}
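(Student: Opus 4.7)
The plan is to use the classical Scheffé/tournament approach for hypothesis selection. The key combinatorial object is the Scheffé set $W = \{i \in [n] : h_1(i) > h_2(i)\}$, which has the crucial property that $h_1(W) - h_2(W) = \dtv(h_1, h_2)$ (this is immediate from the fact that total variation distance equals the maximum of $|h_1(S) - h_2(S)|$ over all $S \subseteq [n]$, and the maximum is attained at $W$). For any distribution $h$, we also have the one-sided fact $|h(W) - p(W)| \leq \dtv(h,p)$.

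The algorithm I would describe is: compute the quantities $h_1(W)$ and $h_2(W)$ exactly from the known hypothesis descriptions (no samples needed); then draw $m = O(\log(1/\delta')/\eps'^2)$ independent samples from $p$ and let $\hat p(W)$ be the fraction falling in $W$. By a standard Chernoff/Hoeffding bound, $|\hat p(W) - p(W)| \leq \eps'/2$ with probability at least $1-\delta'$; call this event $E$. Finally, output whichever of $h_1, h_2$ has its $W$-mass closer to $\hat p(W)$ (ties broken arbitrarily).

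For the analysis, condition on $E$ and assume WLOG that $\dtv(h_1, p) \leq \eps'$ (the ``good'' hypothesis is $h_1$). I would split into two cases based on the magnitude of $\dtv(h_1, h_2)$.  In the \emph{close} case where $\dtv(h_1, h_2) \leq 5\eps'$, the triangle inequality gives $\dtv(h_2, p) \leq \dtv(h_1, h_2) + \dtv(h_1, p) \leq 6\eps'$, so \emph{either} output is acceptable. In the \emph{far} case where $\dtv(h_1, h_2) > 5\eps'$, observe that $|\hat p(W) - h_1(W)| \leq |\hat p(W) - p(W)| + |p(W) - h_1(W)| \leq \eps'/2 + \eps' = 3\eps'/2$, while $|\hat p(W) - h_2(W)| \geq |h_1(W) - h_2(W)| - |\hat p(W) - h_1(W)| \geq 5\eps' - 3\eps'/2 > 3\eps'/2$. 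Thus the algorithm picks $h_1$, and the conclusion holds.

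The proof is essentially routine once the Scheffé set and Hoeffding bound are in hand, so the main (minor) obstacle is just careful bookkeeping of the constants: the slack of $\eps'/2$ on the empirical estimate, the slack of $\eps'$ from $\dtv(h_1,p) \leq \eps'$, and the threshold $5\eps'$ used in the case split must be chosen consistently so that both cases yield the advertised $6\eps'$ bound. A minor additional step is to observe that the algorithm is efficient: it needs only oracle access to $h_1(W), h_2(W)$ (which is reasonable for the ``succinct'' hypotheses output by $\AND, \ANI$, since those are piecewise uniform on $\poly$ intervals), and the sample bound $m = O(\log(1/\delta')/\eps'^2)$ matches the stated cost.
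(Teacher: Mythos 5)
Your proposal is correct and follows essentially the same route as the paper: both use the Scheff\'e set $W=\{i: h_1(i)>h_2(i)\}$, estimate $p(W)$ empirically with a Chernoff/Hoeffding bound, and split on whether $\dtv(h_1,h_2)$ exceeds $5\eps'$ (in which case the triangle inequality makes either answer acceptable) with the same $3\eps'/2$ slack bookkeeping. The only cosmetic difference is your decision rule (output the hypothesis whose $W$-mass is closer to the empirical estimate) versus the paper's threshold-based competition with explicit draws; the two are interchangeable here.
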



\noindent For the sake of completeness, we describe and analyze the {\tt Choose-Hypothesis} algorithm
in Appendix~\ref{sec:choosehypothesis}.



\section{Learning $k$-modal Distributions} \label{sec:learning}


In this section, we present our main result: a nearly sample-optimal
and computationally efficient algorithm to learn an unknown $k$-modal
distribution.
In Section~\ref{ssec:kmodal-learn-simple} we present a simple learning
algorithm with a suboptimal sample complexity.
In Section~\ref{ssec:kmodal-learn} we present our main result which involves
a property testing algorithm as a subroutine.


\ignore{

I really feel we should sell our algorithms as reductions to the monotone case.
What is the idea of both algortithms ? Isolate the modes. You have at most $k$ of them.
If you can isolate them in $k$ interval whose total mass under $p$ is at most $\eps$,
you are good.You can use Birge to learn the remaining stuff and ignore these bad intervals.

The first algorithm: Construct $k/\eps$ intervals of mass at most $\eps/k$ each.
You know $k$ of them are bad, but you do not know which ones. The bad intervals have
total mass $\eps$. Use Birge to learn each interval. $k$ of the hypotheses will be wacky,
the rest will be good.Does not matter you don't know which ones.

The second algorithm: In fact, identify the $k$ bad intervals. Testing algotihm:
Test whether an interval is monotone or not. Partition distribution in $k/\eps$ intervals of mass $\eps/k$.
Test (using binary search) to find the $k$ non-monotone intervals. Ignore those.
Learn the rest using Birge. Why do we win? We run Birge in $k$ superintervals,
as opposed to $k/\eps$ atomic intervals.

Issue: Testing algorithm takes samples from $p$. This means we do need some kind
of coupon collector to deduce that all these bins get enough samples. Probably will
add a $\log k$ or sth. But I feel it should be ok.
}

\subsection{Warm-up: A simple learning algorithm} \label{ssec:kmodal-learn-simple}

In this subsection, we give an algorithm that runs in time $\poly(k, \log n, 1/\eps, \log(1/\delta))$
and learns an unknown $k$-modal distribution to accuracy $\eps$ and confidence $\delta$.
The sample complexity of the algorithm is \new{essentially optimal as a function of $k$ (up to a logarithmic factor), but}
suboptimal as a function of $\eps$, by a polynomial factor.

In the following \new{pseudocode} we give a detailed description of the algorithm {\tt Learn-kmodal-simple};
\new{the algorithm outputs} an $\eps$-accurate hypothesis with confidence $9/10$ (see Theorem~\ref{thm:correctness-learn-simple}). 
We explain how to boost the confidence to $1-\delta$ after the proof of the theorem.

The algorithm  {\tt Learn-kmodal-simple} works as follows: We start by partitioning
the domain $[n]$ into consecutive intervals of mass ``approximately $\eps/k$.''
To do this, \new{we draw $\Theta(k/\eps^3)$ samples from $p$ and greedily partition the domain \newer{into} disjoint intervals of 
empirical mass roughly $\eps/k$.} (Some care is needed in this step, since there may be ``heavy'' points in the support of the distribution;
however, we gloss over this technical issue for the sake of this intuitive explanation.)
\new{Note that we do {\em not} have a guarantee that {\em each} such interval will have true probability mass $\Theta(\eps/k)$. 
In fact, it may well be the case that the additive error $\delta$ between the true probability mass of an interval and its empirical mass
(roughly $\eps/k$) is $\delta = \omega(\eps/k)$.
The error guarantee of the partitioning is more ``global''  in that the {\em sum} of these errors across all such intervals 
is at most $\eps$. In particular,  as a simple corollary of the VC inequality, we can deduce 
the following statement that will be used several times throughout the paper:

\begin{fact} \label{fact:vc}
Let $p$ be any distribution over $[n]$ and $\widehat{p}_m$ be the empirical distribution of $m$ samples from $p$.  
For $m = \Omega\left( (d/\eps^2) \log(1/\delta) \right)$, with probability at least $1-\delta$, 
for {\em any} collection $\mathcal{J}$ of (at most) $d$ disjoint intervals in $[n]$, we have that 
$$ \littlesum_{J \in \mathcal{J}} |p(J) - \widehat{p}_m(J)|  \le \eps.$$
\end{fact}

\begin{proof}
Note that 
\begin{equation} \label{eq:ena}
\littlesum_{J \in \mathcal{J}} |p(J) - \widehat{p}_m(J)|   = 2 |p(A) - \widehat{p}_m(A) |,
\end{equation}
where $A = \{J \in \mathcal{J} : p(J) >  \widehat{p}_m(J) \}$. Since $\mathcal{J}$ is a collection of at most $d$ intervals, 
it is clear that $A$ is a union of at most $d$ intervals.
If $\mathcal{A}_d$ is the family of all unions of at most $d$ intervals, then the right hand side of (\ref{eq:ena}) is at most $2 \| p - \widehat{p}_m \|_{\mathcal{A}_d}$.
Since the VC--dimension of $\mathcal{A}_d$ is $2d$, Theorem~\ref{thm:vc-inequality} implies that the quantity (\ref{eq:ena}) has expected value at most \newer{$\eps/2$}.
The claim now follows by an application of Theorem~\ref{thm:bdd-diff} with \newer{$\eta = \eps/2$}.
\end{proof}

}

If this step is successful, we have partitioned the domain into a set of $O(k/\eps)$ consecutive intervals
of probability mass ``roughly $\eps/k$.'' \new{The} next step is to apply Birg{\'e}'s monotone learning algorithm
to each interval.

\smallskip

A caveat comes from the fact that not all such intervals are guaranteed to be
monotone (or even close to being monotone). However, since our input distribution is assumed to be $k$-modal,
all but (at most) $k$ of these intervals are monotone. Call a non-monotone interval ``bad.'' Since all intervals
have \new{empirical} probability mass at most $\eps / k$ and there are at most $k$ bad intervals, it follows from Fact~\ref{fact:vc} that 
these intervals contribute at most \new{$O(\eps)$} to the total mass. So
even though Birg{\'e}'s algorithm gives no guarantees for bad intervals,
these intervals do not affect the error by more than \new{$O(\eps)$}.

Let us now focus on the  monotone intervals. For each such interval, we do not
know if it is monotone increasing or monotone decreasing. To overcome this
difficulty, we run both monotone algorithms $\ANI$ and $\AND$
for each interval and then use
hypothesis testing to choose the correct candidate distribution.

\smallskip

Also, note that since we have $\new{O(k/\eps)}$ intervals, we need to run each instance of both
the monotone learning algorithms and the hypothesis testing algorithm with
confidence $1-O(\eps/k)$, so that we can guarantee that the
overall algorithm has confidence $9/10$. Note that
Theorem~\ref{thm:birge-monotone} and Markov's inequality imply that if we
draw $\Omega(\log n / \eps^3)$ samples from a non-increasing distribution $p$,
the hypothesis $\widetilde{p}$ output by $\ANI$ satisfies
$\dtv(\widetilde{p},p) \leq \eps$ with probability $9/10$. We can boost the
confidence to $1-\delta$ with an overhead of $O(\log(1/\delta)
\log\log(1/\delta))$ in the sample complexity:


\begin{fact} \label{fact:birge-tournament}
Let $p$ be a non-increasing distribution over $[n]$.
There is an algorithm $\ANI_{\delta}$ with the following performance guarantee:
Given $(\log n /\eps^3) \cdot \tilde{O}(\log(1/\delta)) )$ samples from $p$, $\ANI_{\delta}$ performs
$\tilde{O}\left( (\log^2 n / \eps^3)\cdot \log^\new{2}(1/\delta) \right)$ bit operations
and outputs a (succinct description of a) hypothesis distribution $\widetilde{p}$ over $[n]$
that satisfies $\dtv(\widetilde{p}, p) \leq \eps$ with probability at least $1-\delta$.
\end{fact}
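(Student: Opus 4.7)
The plan is a standard confidence-boosting reduction: run Birgé's base algorithm $\ANI$ many times to produce several candidate hypotheses, at least one of which is accurate with overwhelming probability, and then use the {\tt Choose-Hypothesis} primitive of Theorem~\ref{thm:choosehypothesis} in a tournament to select a good one. Since $p$ is non-increasing we have $\opt=0$ in Theorem~\ref{thm:birge-monotone}, so drawing $m_0 = \Theta(\log n/\eps^3)$ samples and invoking $\ANI$ yields $\E[\dtv(\widetilde{p},p)] \leq \eps/C$ for a constant $C$ of our choosing; by Markov's inequality, a single run produces a hypothesis with $\dtv(\widetilde p,p) \leq \eps$ with probability at least, say, $9/10$.

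Next, I would run $\ANI$ independently $N = \Theta(\log(1/\delta))$ times on fresh sample batches of size $m_0$ each, obtaining candidates $h_1,\dots,h_N$. A Chernoff/independence bound shows that with probability at least $1-\delta/2$ at least one $h_i$ satisfies $\dtv(h_i,p)\leq \eps$. The total sample cost of this step is $N\cdot m_0 = O(\log n/\eps^3)\cdot \log(1/\delta)$, and the total bit-operation cost is $\tilde{O}(m_0\log n)\cdot N = \tilde{O}((\log^2 n/\eps^3)\log(1/\delta))$, in line with the claimed running time.

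To pick a winner, I would run a single-elimination tournament on the $N$ candidates: repeatedly pair surviving hypotheses and invoke {\tt Choose-Hypothesis}$^p(h_i,h_j,\eps,\delta'')$ to advance one of them, with $\delta'' = \delta/(2N)$ so that a union bound over the at most $N$ comparisons ensures all calls succeed with probability $\geq 1-\delta/2$. By Theorem~\ref{thm:choosehypothesis}, each comparison uses $O(\log(1/\delta'')/\eps^2) = O(\log(N/\delta)/\eps^2)$ samples, and conditioned on all comparisons being correct, the hypothesis surviving the tournament is $6\eps$-close to $p$ (adjusting the accuracy constant in the initial call to $\ANI$ absorbs the factor $6$ to recover final error $\eps$). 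The tournament's total sample cost is $O(N\log(N/\delta)/\eps^2) = \tilde{O}(\log(1/\delta)/\eps^2)$, which is dominated by the Birgé step, yielding the overall $(\log n/\eps^3)\cdot \tilde{O}(\log(1/\delta))$ sample bound.

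There is no real technical obstacle here; the argument is purely bookkeeping on top of Theorems~\ref{thm:birge-monotone} and~\ref{thm:choosehypothesis}. The only subtle point is the source of the $\tilde{O}$: the $\log\log(1/\delta)$ factor arises because the tournament's per-comparison sample size scales with $\log(N/\delta) = O(\log\log(1/\delta) + \log(1/\delta))$, and combining this with the $N = \Theta(\log(1/\delta))$ independent invocations of $\ANI$ absorbs a $\log\log(1/\delta)$ into the hidden polylog factor. A union bound over the ``good candidate exists'' and ``all tournament comparisons correct'' events gives overall failure probability at most $\delta$, completing the proof.
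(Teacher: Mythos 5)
Your overall plan (repeat $\ANI$ on fresh batches, then select among the $N=\Theta(\log(1/\delta))$ candidates via {\tt Choose-Hypothesis}) is the paper's approach, and the bookkeeping for the Birg\'e runs is fine. But the selection step as you describe it --- a \emph{single-elimination} bracket --- has a genuine gap. The guarantee of Theorem~\ref{thm:choosehypothesis} is conditional: it only promises a $6\eps'$-close winner \emph{if one of the two inputs is $\eps'$-close to $p$}. In your bracket, after the round in which the genuinely good candidate $h^*$ (with $\dtv(h^*,p)\le\eps$) participates, the survivor is only guaranteed to be $6\eps$-close, so in the next round the precondition of the theorem need not hold and {\tt Choose-Hypothesis} gives no guarantee at all; a candidate that is $100\eps$-far can then advance. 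Even if you re-parametrize the accuracy round by round, the error grows by a factor of $6$ per round, giving $6^{\Theta(\log N)}\eps=\mathrm{poly}(\log(1/\delta))\cdot\eps$ after the $\Theta(\log\log(1/\delta))$ rounds, which is not $O(\eps)$.

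The fix, which is what the paper does (Lemma~\ref{lem:log-cover-size} in Appendix~\ref{sec:ht}), is a \emph{round-robin} tournament: run {\tt Choose-Hypothesis}$^p(h_i,h_j,\eps,\delta/(2N))$ for \emph{every} pair and output any candidate that is never defeated. The point is that every candidate then competes directly against $h^*$, so by Lemma~\ref{lem:kostas3}(i) any undefeated candidate must be $6\eps$-close, while by Lemma~\ref{lem:kostas3}(ii) $h^*$ itself is never defeated, so an output exists. This costs $O(N^2)$ comparisons rather than $O(N)$ (hence the paper's $O(\log^2(1/\delta))$ runtime overhead), but the sample cost stays at $O(\eps^{-2}\log(N/\delta))$ per comparison and the comparisons can share samples, so the overall sample bound you derived is unaffected.
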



The algorithm $\ANI_{\delta}$ runs $\ANI$ $O(\log(1/\delta))$ times
and performs a tournament among the candidate hypotheses using
{\tt Choose-Hypothesis}.
Let $\AND_{\delta}$ denote the corresponding  algorithm for learning
non-decreasing distributions with confidence $\delta$.
We postpone further details on these algorithms to  Appendix~\ref{sec:ht}.

\begin{theorem} \label{thm:correctness-learn-simple}
The algorithm {\small {\tt Learn-kmodal-simple}}
uses  \new{$$\frac{k \log n}{\eps^4} \cdot \tilde{O}\left( \log (k/\eps) \right)$$} samples,
performs
$\poly(k, \log n, 1/\eps)$ bit operations,
and learns a $k$-modal distribution to accuracy $O(\eps)$ with probability $9/10$.
\end{theorem}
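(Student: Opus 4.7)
My plan is the standard DKW + Birgé + hypothesis-testing reduction, organized around three events that must hold simultaneously with probability $\ge 9/10$: (i) $\dk(p,\wh p) \le \tau$ with $\tau = \eps^2/(100k)$; (ii) each ``good'' (monotone) light interval with non-negligible mass receives the right number of samples in $\mathbf{s}$ and $\mathbf{s}'$; and (iii) Birgé and {\tt Choose-Hypothesis} both succeed on those intervals. Sample complexity and runtime accounting is immediate: $r+m+m'$ sums to the stated bound, and each per-interval invocation of Theorems~\ref{thm:birge-monotone} and~\ref{thm:choosehypothesis} costs $\poly(k,\log n,1/\eps)$ bit-operations.

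For (i), Theorem~\ref{thm:DKW} with $r = \Theta(1/\tau^2) = \Theta(k^2/\eps^4)$ gives $\dk(p,\wh p) \le \tau$ with probability $99/100$, so every interval and every heavy point has its $p$-mass estimated up to $\pm 2\tau$. In particular each atomic interval has $p(I_i) = \Omega(\eps/k)$ (so $\ell = O(k/\eps)$), each heavy point has $p(b_j) = \Omega(\eps/k)$, and each light interval has $p(I'_i) = O(\eps/k)$. Since $p$ is $k$-modal, at most $k$ light intervals contain a left extreme point of $p$ in their interior; call these \emph{bad} and the rest \emph{good}. The total $p$-mass of bad intervals is $\le k \cdot O(\eps/k) = O(\eps)$, so they contribute at most $O(\eps)$ to $\dtv(p,h)$ even if Birgé returns garbage there; and on each good interval $p_{I'_i}$ is monotone so one of $\ANI, \AND$ applies.

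For (ii) and (iii), I apply Chernoff to each good $I'_i$: whenever $p(I'_i) = \Omega(\eps/k)$, the sample $\mathbf{s}$ deposits $\Omega((\log n/\eps^3) \tilde\Theta(\log(1/\delta')))$ points into $I'_i$ with probability $1 - \delta'/\ell$ (similarly $\mathbf{s}'$ deposits $\Omega(\log(1/\delta')/\eps^2)$ points). Fact~\ref{fact:birge-tournament} then gives one of $\widetilde p^{\uparrow}_{I'_i}, \widetilde p^{\downarrow}_{I'_i}$ at $\dtv$-distance $\le \eps$ from $p_{I'_i}$ with probability $1 - \delta'$, and Theorem~\ref{thm:choosehypothesis} returns $\widetilde p_{I'_i}$ with $\dtv(\widetilde p_{I'_i}, p_{I'_i}) \le 6\eps$ with probability $1 - \delta'$. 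Good intervals of smaller mass arise only as the light remnant of a heavy atomic interval; for them I bound Birgé's weighted contribution via the expected-error guarantee of Theorem~\ref{thm:birge-monotone}, $p(I'_i) \cdot O((\log n/(m\,p(I'_i)))^{1/3})$, and aggregate via Hölder:
\[
\sum_i p(I'_i)\cdot\bigl(\log n/(m\,p(I'_i))\bigr)^{1/3} \;=\; (\log n/m)^{1/3}\sum_i p(I'_i)^{2/3} \;\le\; (\log n\cdot\ell/m)^{1/3} \;=\; O(\eps),
\]
using $\sum_i p_i^{2/3} \le \ell^{1/3}(\sum_i p_i)^{2/3} \le \ell^{1/3}$ and the chosen $m$. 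A union bound over the $\ell$ intervals with $\delta' = \eps/(500k)$ keeps the cumulative failure probability below $1/20$.

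Finally, I decompose
\[
p - h \;=\; \sum_i p(I'_i)\bigl(p_{I'_i} - \widetilde p_{I'_i}\bigr) \;+\; \sum_i \bigl(p(I'_i) - \wh p(I'_i)\bigr)\widetilde p_{I'_i} \;+\; \sum_j \bigl(p(b_j) - \wh p(b_j)\bigr)\mathbf{1}_{b_j}.
\]
The second and third sums contribute $(\ell+t)\cdot 2\tau = O(\eps)$ by (i); the bad-interval part of the first sum is $\le 2\sum_{\text{bad}} p(I'_i) = O(\eps)$; the large-mass good-interval part is $\le 12\eps\sum_{\text{good}} p(I'_i) \le 12\eps$; and the small-mass good-interval part is $O(\eps)$ by the Hölder aggregation. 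Taking $L_1$-norms gives $\dtv(p,h) = O(\eps)$ with probability $\ge 9/10$. The main obstacle is arranging (ii) and (iii) so as to handle both large and small-mass good intervals simultaneously: the $\tilde O(\log(k/\eps))$ overhead in $m$ is precisely what is needed to boost each per-interval Birgé/hypothesis-testing call to confidence $\delta' = \eps/(500k)$ so that the union bound over $\ell = O(k/\eps)$ intervals still leaves constant slack.
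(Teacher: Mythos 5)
Your overall skeleton --- DKW partition into $O(k/\eps)$ atomic intervals and heavy points, writing $p$ as a mixture over light intervals and heavy points, discarding the at most $k$ non-monotone ``bad'' light intervals (total mass $O(\eps)$), and running Birg\'e plus {\tt Choose-Hypothesis} on the monotone ones --- is the same as the paper's. The divergence, and the gap, is in your treatment of monotone light intervals of small mass.

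For those intervals you bound only the Birg\'e error $p(I'_i)\cdot O\bigl((\log n/(m\,p(I'_i)))^{1/3}\bigr)$ and aggregate by H\"older. But the hypothesis $\widetilde p_{I'_i}$ that actually enters $h$ is whatever {\tt Choose-Hypothesis} returns in Step~5, and Theorem~\ref{thm:choosehypothesis} only guarantees a $6\eps'$-accurate winner if one of the two candidates is already $\eps'$-accurate \emph{and} $\Omega(\log(1/\delta')/\eps'^2)$ samples of $\mathbf{s}'$ land in $I'_i$. For an interval of mass, say, $\eps^3/k$, essentially no samples of $\mathbf{s}'$ land there and neither candidate need be $\eps$-accurate, so {\tt Choose-Hypothesis} may return a hypothesis at distance $1$ from $p_{I'_i}$; your H\"older bound on the Birg\'e error then says nothing about $\dtv(\widetilde p_{I'_i},p_{I'_i})$. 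The paper closes exactly this hole with two moves you omit: it first discards monotone light intervals of mass at most $\eps^2/(20k)$ outright (their total contribution is at most $\ell\cdot\eps^2/(20k)\le\eps/2$ even with error $1$), and then buckets the remaining intervals dyadically by mass, running a coupon-collector argument per bucket so that an interval of mass $\approx 2^{-i}\eps/(5k)$ provably receives enough samples for Birg\'e and {\tt Choose-Hypothesis} to deliver accuracy $O(2^{i/3}\eps)$ with probability $1-\delta'$; the weighted sum $\sum_i w_i\,2^{-i}(\eps/k)\cdot 2^{i/3}\eps=O(\eps)$ then finishes (Claim~\ref{claim:not-too-light}). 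Your H\"older computation is the continuous analogue of that final summation --- it is in fact how the paper argues for its main algorithm in Lemma~\ref{lemma:exp-dtv} --- but it does not by itself substitute for the per-bucket sample-count and hypothesis-selection guarantees. A minor further point: $m\,p(I'_i)$ is only the \emph{expected} number of samples in $I'_i$; to insert it into Birg\'e's bound you need something like $\E[(m_i+1)^{-1/3}]\le (m\,p(I'_i))^{-1/3}$, i.e.\ Claim~\ref{claim:binomial}.
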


\bigskip

\hskip-.2in \framebox{
\medskip \noindent \begin{minipage}{16.5cm}

\medskip

{\tt Learn-kmodal-simple}

\noindent {\bf Inputs:} $\eps >0$; sample access to $k$-modal
distribution $p$ over $[n]$


\begin{enumerate}

\item \new{Fix $d: = \lceil 20k/\eps \rceil$. Draw $ r = \Theta(d/\eps^2)$} samples
        from $p$ and let $\wh{p}$ denote the resulting empirical distribution.


\item Greedily partition the domain $[n]$ into $\ell$ {\em atomic intervals} $\mathcal{I}:= \{I_i\}_{i=1}^{\ell}$ as follows:
          \begin{enumerate}
          \item[(a)] $I_1 := [1, j_1]$, where $j_1 := \min\{ j \in [n] \mid \wh{p}([1,j]) \geq \eps/(10k)\}$.
           \item[(b)] For $i \geq 1$, if $\cup_{j=1}^i I_j = [1, j_i]$, then $I_{i+1}:=[j_i+1, j_{i+1}]$, where
           $j_{i+1}$ is defined as follows: 
           \begin{itemize}
          \item If $\wh{p}([j_i+1, n]) \geq \eps/(10k)$, then $j_{i+1}: = \min\{ j \in [n] \mid \wh{p}([j_{i}+1, j]) \geq \eps/(10k)\}${.}
           \item Otherwise, $j_{i+1} := n$.
           \end{itemize}
           \end{enumerate}

\item Construct a set of $\ell$ {\em light intervals}  $\mathcal{I'}:= \{I'_i\}_{i=1}^{\ell}$ and a set $\{b_i\}_{i=1}^t$ of $t \leq \ell$ {\em heavy points} as follows:
          \begin{enumerate}
             \item[(a)] For each interval $I_i = [a,b] \in \mathcal{I}$, if $\wh{p}(I_{\new{i}}) \geq \eps/(5k)$ define $I'_i:= [a,b-1]$ and make $b$ a heavy point.
                             (Note that it is possible to have $I'_i=\emptyset.$)
             \item[(b)] Otherwise, define $I'_i :=I_i$.
          \end{enumerate}

\smallskip

\noindent Fix $\delta': = {\eps / (500k)}$.


\item Draw $m =  (k/\eps^4) \cdot \log(n) \cdot \tilde{\Theta}(\log(1/\delta'))$
samples $\mathbf{s} = \{s_i\}_{i=1}^m$ from $p$. For each light interval $I'_i$, $i \in [\ell]$, run both
$\ANI_{\delta'}$ and $\AND_{\delta'}$ on the conditional distribution $p_{I'_i}$ using the samples in $\mathbf{s} \cap I'_i$. 
Let   $\widetilde{p}^{\downarrow}_{I'_i}$, $\widetilde{p}^{\uparrow}_{I'_i}$  be the corresponding conditional hypothesis distributions.


\item Draw $m' = \Theta((k/\eps^4)\cdot \log(1/\delta'))$ samples  $\mathbf{s'} = \{s'_i\}_{i=1}^{m'}$ from $p$.
For each light interval $I'_i$, $i \in [\ell]$,
run  {\tt Choose-Hypothesis}$^p(\widetilde{p}^{\uparrow}_{I'_i},\widetilde{p}^{\downarrow}_{I'_i}, \eps,\delta')$
using the samples in $\mathbf{s'} \cap I'_i$.
Denote by $\widetilde{p}_{I'_i}$ the returned conditional distribution on $I'_i$.

\item Output the hypothesis $h = \littlesum_{j=1}^{\ell} \wh{p}(I'_j) \cdot \widetilde{p}_{I'_j} + \littlesum_{j=1}^t \wh{p}(b_j) \cdot {\bm 1}_{b_j}$.

\end{enumerate}

\end{minipage}}

\bigskip

\begin{proof} 
\new{First, it is easy to see that the algorithm has the claimed sample complexity.
Indeed, the algorithm draws a total of $r+m+m'$ samples in Steps 1, 4 and 5. 
The running time is also easy to analyze, as it is easy to see that
every step can be performed in polynomial time (in fact, nearly linear time)
in the sample size.}

We need to show that with probability $9/10$
(over its random samples), algorithm
{\tt Learn-kmodal-simple} outputs a hypothesis $h$ such that $\dtv(h,p) \leq O(\eps)$.

\new{Since $r = \Theta(d/\eps^2)$ samples are drawn in Step~1, Fact~\ref{fact:vc}
implies that with probability of failure at most $1/100$,
for each family $\mathcal{J}$ of at most $d$ disjoint intervals from $[n]$, we have
\begin{equation} \label{eqn:cond}
\littlesum_{J \in \mathcal{J}} |p(J) - \widehat{p}_m(J)|  \le \eps.
\end{equation}
}
For the rest of the analysis of {\tt Learn-kmodal-simple}
we condition on this ``good'' event.

\smallskip

Since every atomic interval $I \in \mathcal{I}$ has $\wh{p}(I) \geq
\eps/(10k)$ (except potentially the rightmost one), it follows that the
number $\ell$ of atomic intervals constructed in Step~2 satisfies $\ell \leq
10 \cdot (k/\eps)$.  \new{By the construction in Steps 2 and 3,
every light interval $I'  \in \mathcal{I}'$ has $\wh{p}(I')  \leq \eps/(5k)$.}
Note also that every heavy point $b$ has $\wh{p}(b) \geq \eps/(10k)$ and
the number of heavy points $t$ is at most $\ell.$


Since the light intervals and heavy points form a partition of $[n]$, we
can write $$p =  \littlesum_{j=1}^{\ell} p(I'_j) \cdot p_{I'_j} + \littlesum_{j=1}^t p(b_j) \cdot {\bm 1}_{b_j}.$$
Therefore, we can bound the variation distance as follows:
\begin{equation} \label{eqn:dtv-ub}
\dtv(h,p) \leq  \littlesum_{j=1}^{\ell} |  \wh{p}(I'_j)  - p(I'_j) | +  \littlesum_{j=1}^t |  \wh{p}(b_j) - p(b_j)| 
 + \littlesum_{j=1}^{\ell} p(I'_j) \cdot \dtv (\widetilde{p}_{I'_j}, p_{I'_j}).
\end{equation}
\new{Since $\ell+t \le d$, by Fact~\ref{fact:vc} and our conditioning, the contribution of the first two terms 
to the sum is upper bounded by $\eps$.}

We proceed to bound the contribution of the third term. Since $p$ is $k$-modal,
at most $k$ of the light intervals $I'_j$ are not monotone for $p.$ Call
these intervals ``bad'' and denote by $\mathcal{B}$ as the set of bad intervals.
Even though we have not identified the bad intervals, we know that all such
intervals are light. Therefore, \new{their total empirical probability mass (under $\widehat{p}_m$)
is at most $k \cdot \eps/(5k) = \eps/5$, i.e., $\littlesum_{I \in \mathcal{B}} \widehat{p}(I) \le \eps/5$. 
By our conditioning \newer{(see Equation~(\ref{eqn:cond}))} and the triangle inequality it follows that 
$$\left|  \littlesum_{I \in \mathcal{B}} p(I)  -  \littlesum_{I \in \mathcal{B}} \widehat{p}(I)  \right|  \le   \littlesum_{I \in \mathcal{B}} \left| p(I) - \widehat{p}(I)  \right| \le \eps$$
which implies that the true probability mass of the bad intervals is at most $\eps/5+\eps = 6\eps/5$.
Hence, the contribution of bad intervals to the third term of the right hand side of (\ref{eqn:dtv-ub}) is at
most $O(\eps)$.}  (Note that this statement holds true independent of the
samples $\mathbf{s}$ we draw in Step 4.) 

It remains to bound the contribution of monotone intervals to the third term.
Let $\ell' \leq \ell$ be the number of monotone light intervals and
assume after renaming the indices that they are $\widetilde{\mathcal{I}} : = \{I'_j\}_{j=1}^{\ell'}$.
To bound from above the right hand side of (\ref{eqn:dtv-ub}), it suffices to show that with probability
at least $19/20$ (over the samples drawn in Steps 4-5) it holds
\begin{equation} \label{eqn:third-term}
\littlesum_{j=1}^{\ell'} p(I'_j) \cdot \dtv (\widetilde{p}_{I'_j}, p_{I'_j}) = O(\eps).
\end{equation}

\new{\noindent To prove (\ref{eqn:third-term}) we
partition the set $\widetilde{\mathcal{I}}$ \newer{into} three subsets based on their probability mass under $p$. 
Note that we do not have a lower bound
on the probability mass of intervals in $\widetilde{\mathcal{I}}$.
Moreover, by our conditioning \newer{(see Equation~(\ref{eqn:cond}))} and \newer{the fact that each interval in $\widetilde{\mathcal{I}}$ is light,} it follows that any $I \in \widetilde{\mathcal{I}}$ has 
$p(I) \le \widehat{p}(I)+\eps \le 2\eps.$ We define the partition of $\widetilde{\mathcal{I}}$ into the following three sets:
$\widetilde{\mathcal{I}_1} = \{ I \in  \widetilde{\mathcal{I}}: p(I) \le \eps^2/(20k)\},$
$\widetilde{\mathcal{I}_2} = \{ I \in  \widetilde{\mathcal{I}}: \eps^2/(20k) < p(I) \le \eps/k\}$ and
$\widetilde{\mathcal{I}_3} = \{ I \in  \widetilde{\mathcal{I}}:  \eps/k < p(I) \le  2\eps \}.$ 

We bound the contribution of each subset in turn.
It is clear that the contribution of  $\widetilde{\mathcal{I}_1}$
to (\ref{eqn:third-term}) is at most 

$$\littlesum_{I \in \widetilde{\mathcal{I}_1}} p(I) \le |\widetilde{\mathcal{I}_1}| \cdot \eps^2/(20k) \le   \ell' \cdot \eps^2/(20k) \leq \ell \cdot \eps^2/(20k) \leq \eps/2.$$

}

To bound from above the contribution of \new{$\widetilde{\mathcal{I}_2}$} to (\ref{eqn:third-term}),
we partition \new{$\widetilde{\mathcal{I}_2}$} into \new{$g_2 = \lceil \log_2 (20/\eps) \rceil =  \Theta (\log(1/\eps))$} groups.
For $i \in [\new{g_2}]$, the set $(\widetilde{\mathcal{I}_2})^i$
consists of those intervals in  $\widetilde{\mathcal{I}_2}$
that have mass under $p$ in the range
\new{$\left(  2^{-i} \cdot (\eps/k), 2^{-i+1} \cdot (\eps/k) \right]$}.
\new{The following statement establishes the variation distance closeness between the conditional hypothesis
for an interval in the $i$-th group $(\widetilde{\mathcal{I}_2})^i$ and the corresponding conditional distribution.}
\begin{claim} \label{claim:not-too-light}
With probability at least $19/20$ (over the sample $\mathbf{s}, \mathbf{s'}$),
for each $i \in [\new{g_2}]$ and each monotone light interval $I'_j \in \new{(\widetilde{\mathcal{I}_2})^i}$ we have
$d_{TV}(\widetilde{p_{I'_j}},p_{I'_j}) = O(\new{2^{i/3} \cdot \eps}).$
\end{claim}
\begin{proof} 
Since in Step~4 we draw $m$ samples, and each interval $I'_j \in
\new{(\widetilde{\mathcal{I}_2})^i}$ has $p(I'_j) \in  \new{ \left[ 2^{-i} \cdot (\eps/k), 2^{-i+1} \cdot (\eps/k) \right]}$,
a standard coupon collector argument~\cite{NS:60} tells us that with probability $99/100$, for {\em each}
$(i,j)$ pair, the interval $I'_j$ will get at least
$\new{2^{-i} \cdot (\log (n) / \eps^3)} \cdot \tilde{\Omega}( \log(1/\delta'))$ many samples.
Let's rewrite this as \new{$(\log (n) / (2^{i/3} \cdot \eps)^3) \cdot \tilde{\Omega}( \log(1/\delta'))$} samples.
We condition on this event.

Fix an interval $I'_j \in \new{(\widetilde{\mathcal{I}_2})^i}$. We first show that with failure probability
at most  $\eps/(500k)$ after Step~4, either $\widetilde{p}^{\downarrow}_{I'_j}$ or
$\widetilde{p}^{\uparrow}_{I'_j}$ will be \new{$(2^{i/3} \cdot \eps)$}-accurate.
Indeed, by Fact~\ref{fact:birge-tournament} and taking into account the number of samples
that landed in $I'_j$, with probability $1-\eps/(500k)$ over $\mathbf{s}$,
$ \dtv (\widetilde{p}^{\alpha_i}_{I'_j}, p_{I'_j})  \leq \new{2^{i/3} \eps}$,
where $\alpha_i = \downarrow$ if $p_{I'_j}$ is non-increasing and  $\alpha_i = \uparrow$ otherwise.
By a union bound over all (at most $\ell$ many) $(i,j)$ pairs,
it follows that with probability at least $49/50$, for each interval $I'_j \in \new{(\widetilde{\mathcal{I}_2})^i}$
one of the two candidate hypothesis distributions is $\new{(2^{i/3} \eps)}$-accurate. We condition on this event.

\newer{
Now consider Step~5.  Since this step draws $m'$ samples, and each interval $I'_j \in (\widetilde{{\cal I}}_2)^i$ has $p(I'_j) \in \left( 2^{-i} \cdot (\eps/k), 2^{-i+1} \cdot (\eps/k) \right]$, as before a standard coupon collector argument~\cite{NS:60} tells us that with probability $99/100$, for \emph{each} $(i,j)$ pair, the interval $I'_j$ will get at least 
$(1/(2^{i/3} \cdot \eps)^3) \cdot \tilde{\Omega}(\log(1/\delta'))$ many samples in this step; we henceforth assume that this is indeed the case for each $I'_j$.  
Thus, Theorem~\ref{thm:choosehypothesis} applied to each fixed interval $I'_j$ implies that the algorithm {\tt Choose-Hypothesis}  }
will output a hypothesis that is $\new{6\cdot(2^{i/3} \eps)}$-close to $p_{I'_j}$ with probability $1-\eps/(500k)$.
By a union bound, it follows that with probability at least $49/50$,
the above condition holds for all monotone light intervals under consideration.
Therefore, except with failure probability $19/20$, the statement of the 
{claim} holds. 
\end{proof}

Given the claim, we exploit the fact that for intervals $I'_j$ such that $p(I'_j)$ is small
we can afford larger error on the total variation distance.
More precisely, let \new{$c_i  = |(\widetilde{\mathcal{I}_2})^i|$}, the number of intervals in
\new{$(\widetilde{\mathcal{I}_2})^i$}, and note that \new{$\littlesum_{i=1}^{g_2} c_i \leq \ell$}.
Hence, we can bound the contribution of \new{$\widetilde{\mathcal{I}_2}$} to (\ref{eqn:third-term}) by
\new{
\begin{eqnarray*}
\littlesum_{i=1}^{g_2} c_i  \cdot (\eps/k) \cdot 2^{-i+1} \cdot  O(2^{i/3} \cdot \eps)
\leq O(1) \cdot (2\eps^2/k) \cdot \littlesum_{i=1}^{g_2} c_i \cdot 2^{-2i/3}.
\end{eqnarray*}
}
Since  \new{$\littlesum_{i=1}^{g_2} c_i = |\widetilde{\mathcal{I}_2}|   \leq \ell$}, the above expression is maximized for
\new{$c_1= |\widetilde{\mathcal{I}_2}| \le \ell$ and $c_i = 0$, $i >1$,} and the maximum value is at most
\new{$$O(1) \cdot (\eps^2/k) \cdot \ell = O(\eps).$$} 
\new{
Bounding the contribution of $\widetilde{\mathcal{I}_3}$ to (\ref{eqn:third-term}) is very similar.
We partition $\widetilde{\mathcal{I}_3}$ into $g_3 = \lceil \log_2 k \rceil +1 =  \Theta (\log(k))$ groups.
For $i \in [\new{g_3}]$, the set $(\widetilde{\mathcal{I}_3})^i$
consists of those intervals in  $\widetilde{\mathcal{I}_3}$
that have mass under $p$ in the range $\left( 2^{-i+1} \cdot \eps, 2^{-i+2} \cdot \eps \right]$.
The following statement is identical to Claim~\ref{claim:not-too-light} 
albeit with different parameters:
\begin{claim} \label{claim:heavy}
With probability at least $19/20$ (over the sample $\mathbf{s}, \mathbf{s'}$),
for each $i \in [g_3]$ and each monotone light interval $I'_j \in (\widetilde{\mathcal{I}_3})^i${,} we have
$d_{TV}(\widetilde{p_{I'_j}},p_{I'_j}) = O(2^{i/3} \cdot \eps \cdot k^{-1/3}).$
\end{claim}
Let \new{$f_i  = |(\widetilde{\mathcal{I}_3})^i|$}, the number of intervals in $(\widetilde{\mathcal{I}_3})^i$. 
Each interval $I \in (\widetilde{\mathcal{I}_3})^i$ has $p(I) \in (d_i, 2d_i]$, where $d_i : = 2^{-i+1} \cdot \eps$.
We therefore have
\begin{equation} \label{eqn:total-mass-at-most-one}
\littlesum_{i=1}^{g_3} d_i \newer{f_i} \le p(\widetilde{\mathcal{I}_3}) \le 1.
\end{equation}
We can now bound from above the contribution of $\widetilde{\mathcal{I}_3}$ to (\ref{eqn:third-term}) by
\begin{eqnarray*}
\littlesum_{i=1}^{g_3} 2d_i \newer{f_i} \cdot  O(2^{i/3} \cdot \eps \cdot k^{-1/3})
\leq O(1) \cdot (\eps/k^{1/3}) \cdot \littlesum_{i=1}^{g_3} d_i \newer{f_i} \cdot 2^{i/3}.
\end{eqnarray*}
By (\ref{eqn:total-mass-at-most-one}) it follows that the above expression is maximized for
$d_{g_3} \newer{f_{g_3}} = 1$ and $d_i \newer{f_i}= 0$, $i <g_3$. The maximum value is at most
$$O(1) \cdot (\eps/k^{1/3}) \cdot 2^{g_3/3} = O(\eps)$$
where the final equality uses the fact that $2^{g_3} \le 4k$ as follows by our definition of $g_3$. 
}
This proves  (\ref{eqn:third-term}) and completes the proof of Theorem~\ref{thm:correctness-learn-simple}.
\end{proof}

To get an $O(\eps)$-accurate hypothesis with probability $1-\delta$, we
can simply run {\tt Learn-kmodal-simple} $O(\log(1/\delta))$ times
and then perform a tournament using Theorem~\ref{thm:choosehypothesis}.
This increases the sample complexity by a $\tilde{O}(\log(1/\delta))$ factor.
The running time increases by a factor of $O(\log^2(1/\delta)).$
We postpone the details for Appendix~\ref{sec:ht}.

\subsection{Main Result: Learning $k$-modal distributions using testing} \label{ssec:kmodal-learn}

Here is some intuition to motivate our $k$-modal distribution learning algorithm
and give a high-level idea of why the dominant term in its sample complexity is
$O(k\log(n/k)/\eps^3).$

Let $p$ denote the target $k$-modal distribution to be learned.
As discussed above, optimal (in terms of time and sample complexity) algorithms are known
for learning a monotone distribution over $[n]$, so if the locations of the
$k$ modes of $p$ were known then it would be straightforward to learn $p$ very efficiently
by running the monotone distribution learner over $k+1$ separate intervals.  But it is clear that
in general we cannot hope to efficiently identify the modes of $p$ exactly (for instance it could
be the case that $p(a)=p(a+2)=1/n$ while $p(a+1) = 1/n + 1/2^n$).  Still, it is natural to try to
decompose the $k$-modal distribution into a collection of (nearly) monotone  distributions
and learn those.  At a high level that is what our algorithm does, using a novel \emph{property testing}
algorithm.


More precisely, we give a distribution testing algorithm with the following performance guarantee:
Let $q$ be a $k$-modal distribution over $[n]$.  Given an accuracy parameter $\tau$, our tester takes $\poly(k/\tau)$ samples from $q$ and outputs ``yes''
with high probability if $q$ is {monotone} and ``no'' with high probability
if $q$ is $\tau$-far from every monotone distribution.
(We stress that the assumption that $q$ is $k$-modal is essential here, since
an easy argument given in \cite{BKR:04long} shows that $\Omega(n^{1/2})$ samples are required to test
whether a general distribution over $[n]$ is monotone versus $\Theta(1)$-far from monotone.)

With some care, by running the above-described tester $O(k/\eps)$ times with accuracy parameter $\tau$,
we can decompose the domain $[n]$ into


\begin{itemize}

\item at most $k+1$ ``superintervals,'' which have the property that the conditional distribution of
$p$ over each superinterval is almost monotone ($\tau$-close to monotone)\ignore{and each superinterval
has probability mass at least $\Omega(\eps/k)$ under $p$};


\item at most $k+1$ ``negligible intervals'', which have the property that each one has
probability mass at most $O(\eps/k)$ under $p$ (so ignoring all of them incurs at most $O(\eps)$ total
error); and


\item at most $k+1$ ``heavy'' points, each of which has mass at least $\Omega(\eps/k)$ under $p.$

\end{itemize}


\noindent
We can ignore the negligible intervals, and the heavy points are easy to handle; however some care must be taken to learn the ``almost monotone''
restrictions of $p$ over each superinterval.  A naive approach, using a generic
$\log(n)/\eps^3$-sample monotone distribution learner that has no performance guarantees if the target
distribution is not monotone, leads to an inefficient overall algorithm.  Such an approach
would require that $\tau$ (the closeness parameter used by the tester) be at most $1/$(the sample complexity
of the monotone distribution learner), i.e., $\tau < \eps^3/\log(n)$.  Since the sample
complexity of the tester is $\poly(k/\tau)$ and the tester is run $\Omega(k/\eps)$ times, this approach would lead
to an overall sample complexity that is unacceptably high.

Fortunately, instead of using a generic monotone distribution learner, we
can use the semi-agnostic monotone distribution learner of Birg\'{e}
(Theorem~\ref{thm:birge-monotone}) that can handle
deviations from monotonicity far more efficiently than the above naive
approach.  Recall that given draws from a distribution
$q$ over $[n]$ that is $\tau$-close to monotone, this algorithm uses
$O(\log(n)/\eps^3)$ samples and outputs a hypothesis distribution that is
$(2\tau + \eps)$-close to monotone.  By using
this algorithm we can take the accuracy parameter $\tau$ for our tester to be $\Theta(\eps)$ and learn
the conditional distribution of $p$ over a given superinterval
to accuracy $O(\eps)$ using $O(\log(n)/\eps^3)$ samples from that
superinterval. Since there are $k+1$ superintervals overall, a careful analysis shows that
$O(k \log(n)/\eps^3)$ samples suffice to handle all the superintervals.

We note that the algorithm also requires an additional additive
$\poly(k/\eps)$ samples (independent of $n$) besides this dominant term
(for example, to run the tester and to estimate accurate weights with
which to combine the various sub-hypotheses).  The overall sample complexity we achieve is stated in Theorem~\ref{thm:learnkmodal} below.


\begin{theorem} (Main) \label{thm:learnkmodal}
The algorithm {\small {\tt Learn-kmodal}} uses
$$O\left(k\log(n/k)/\eps^3 + (\new{k^2}/\eps^3)\cdot \log(k/\eps) \cdot \log\log(k/\eps) \right)$$
samples, performs  $\poly(k, \log n, 1/\eps)$
bit operations, and learns any $k$-modal distribution to accuracy $\eps$
and confidence $9/10$.
\end{theorem}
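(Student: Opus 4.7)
The plan is to combine the ``moderately naive'' partitioning strategy from the warm-up with a monotonicity tester so that Birg\'{e}'s semi-agnostic monotone learner need only be invoked $O(k)$ times (once per ``superinterval''), rather than $O(k/\eps)$ times (once per atomic interval). Since each invocation of Birg\'{e}'s learner costs $O(\log(n/k)/\eps^3)$ samples by Theorem~\ref{thm:birge-monotone}, this brings the dominant term down to $O(k\log(n/k)/\eps^3)$, matching the lower bound in Proposition~\ref{obs:lowerbound} up to logarithmic factors.

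Concretely, I would first draw $\tilde{O}(k^2/\eps^2)$ samples and, by the DKW inequality (Theorem~\ref{thm:DKW}), greedily partition $[n]$ into $\ell = O(k/\eps)$ atomic intervals of empirical mass $\Theta(\eps/k)$, pulling out ``heavy'' points of mass $\Omega(\eps/k)$ as singletons (exactly as in Step~2--3 of {\tt Learn-kmodal-simple}). Next, I would invoke the $\poly(k/\eps)$-sample monotonicity tester (promised earlier in the Approach section) on the conditional distribution $p_{I}$ for every atomic light interval $I$, with closeness parameter $\tau = \Theta(\eps)$ and confidence $1-O(\eps/k)$, using a union bound over the $O(k/\eps)$ invocations. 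Since $p$ is $k$-modal, at most $k$ atomic intervals contain a mode of $p$, so at most $k$ intervals can fail the test; I would then merge the remaining consecutive ``passed'' intervals into at most $k+1$ superintervals $J_1,\dots,J_r$, each of which is the union of atomic intervals whose conditional distributions are each $O(\eps)$-close to monotone. A short argument using a hybrid over the atomic pieces shows that $p_{J_t}$ itself is $O(\eps)$-close to a monotone distribution on $J_t$.

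Now I would run the semi-agnostic learner of Theorem~\ref{thm:birge-monotone} on each superinterval $J_t$, in both its $\ANI$ and $\AND$ flavors, boosted to confidence $1-O(1/k)$ via Fact~\ref{fact:birge-tournament}, and pick the better of the two hypotheses per superinterval using {\tt Choose-Hypothesis} (Theorem~\ref{thm:choosehypothesis}) with accuracy $\Theta(\eps)$. Because the semi-agnostic guarantee degrades only as $2\cdot \opt + \eps$, each superinterval's hypothesis is $O(\eps)$-close to $p_{J_t}$ in total variation. Combining these with the empirical weights $\wh{p}(J_t)$ and the heavy-point masses yields a hypothesis $h$ satisfying $\dtv(h,p)\leq O(\eps)$, after the usual accounting: the weight-estimation error is $O(\tau \cdot \ell) = O(\eps)$, the discarded (``failed'') atomic intervals contribute at most $k\cdot O(\eps/k) = O(\eps)$, and each superinterval contributes $O(\eps)\cdot p(J_t)$. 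Rescaling $\eps$ and boosting the confidence from $9/10$ to $1-\delta$ by running the whole procedure $O(\log(1/\delta))$ times and applying a {\tt Choose-Hypothesis} tournament (as in Appendix~\ref{sec:ht}) gives the full sample bound in Theorem~\ref{thm:main}.

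The main obstacle will be the sample-budgeting for the tester calls: each tester invocation requires $\poly(k/\eps)$ samples drawn from the \emph{conditional} distribution on that atomic interval, and one must argue (via a coupon-collector-style bound, as in Claim~\ref{claim:not-too-light}) that a single pool of $\tilde{O}(k^3/\eps^3)$ samples from $p$ simultaneously lands enough samples inside every atomic interval that has mass $\Omega(\eps/k)$; intervals much lighter than this can be absorbed into the $O(\eps)$ error budget. The second delicate point is verifying that the superintervals, formed by merging atomic intervals that individually pass the monotonicity test, are themselves close to monotone on the whole of $J_t$ rather than just piecewise -- this uses the fact that $p$ is globally $k$-modal together with the fact that any ``mode'' landing strictly inside a passed atomic interval would have been detected by the tester, so within each superinterval the direction of monotonicity is consistent and the total deviation from a single monotone distribution on $J_t$ is controlled by the per-interval testing error.
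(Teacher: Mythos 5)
Your high-level plan (atomic intervals of mass $\Theta(\eps/k)$, a monotonicity tester to locate the $O(k)$ problematic spots, Birg\'{e}'s semi-agnostic learner on $O(k)$ superintervals) is the right one, but the specific way you form the superintervals has a genuine gap. You propose to test each atomic interval \emph{individually} and then merge consecutive intervals that pass. The problem is that ``$p_{I_j}$ is $\tau$-close to monotone for every atomic piece $I_j$ of $J$'' does \emph{not} imply that $p_J$ is close to monotone, even using the global $k$-modality of $p$. Concretely, let $p$ be a symmetric unimodal (tent-shaped) distribution peaked at $n/2$, flat enough near the peak that the conditional distribution on the single atomic interval containing $n/2$ is nearly uniform. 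Then every atomic interval passes the test (each is exactly or nearly monotone on its own), so your procedure merges all of $[n]$ into one superinterval; yet $p_{[n]}=p$ is $\Omega(1)$-far from every monotone distribution, and Birg\'{e}'s guarantee of $2\cdot\opt+\eps$ is vacuous. Your claim that ``any mode landing strictly inside a passed atomic interval would have been detected'' is false --- the tester only detects $\tau$-farness from monotonicity of the \emph{conditional} distribution on that small interval, and a global mode can sit inside an interval whose conditional distribution is nearly uniform. Likewise, two adjacent atomic intervals can each be exactly monotone but with opposite orientations, and no per-interval test sees this.

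The paper avoids this by running $\TND$ and $\TNI$ on the conditional distribution of the \emph{growing union} $\cup_{i=1}^{j} I_i$ of atomic intervals, extending the current superinterval until both testers say ``no'' on the union; the offending atomic interval is then discarded as negligible (or yields a heavy point) and a new superinterval is started. This directly certifies (Claim~\ref{claim:Siisright}) that each whole superinterval is $\eps$-close to monotone \emph{and} reveals its orientation, so Birg\'{e}'s learner can be run with the correct direction without a per-superinterval {\tt Choose-Hypothesis} step. Your sample-budgeting concerns (the coupon-collector argument guaranteeing $\Omega((k/\eps)^2\log(1/\tau'))$ samples per atomic interval, and the $\tilde O(k^3/\eps^3)$ additive term) are handled in the paper essentially as you describe, and the final H\"older-inequality accounting over superintervals gives the $O(k\log(n/k)/\eps^3)$ dominant term; but without repairing the merging step the claimed $O(\eps)$ accuracy does not follow.
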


\ignore{
Birg\'{e} \cite{Birge:87} has shown that any algorithm for learning
a non-increasing distribution over $[n]$ to total
variation distance $\eps$ must use $\Omega(\log(n)/\eps^3)$ samples.
This lower bound can easily be extended to show that any algorithm
for learning a $k$-modal distribution over $[n]$ to accuracy $\eps$
must use $\Omega(\log(n/k)/\eps^3)$ samples.
Consequently, for $k \leq \widetilde{O}(\sqrt{\log n})$
the sample complexity of Theorem~\ref{thm:learnkmodal}
is information-theoretically optimal up to constant factors.
\rnote{Not quite true because of the extra $\log(1/\eps)$, right?}
}

Theorem~\ref{thm:main} follows from Theorem~\ref{thm:learnkmodal}
by running {\tt Learn-kmodal} $O(\log(1/\delta))$ times and
using hypothesis testing to boost the confidence to $1-\delta$.
 We give details in Appendix~\ref{sec:ht}.

Algorithm {\tt Learn-kmodal} makes essential use of
an algorithm $\TND$ for testing whether a $k$-modal distribution over $[n]$
is non-decreasing.
Algorithm $\TND(\eps,\delta)$ uses $O(\log(1/\delta)) \cdot \new{(k/\eps^2)}$ samples from a $k$-modal
distribution $p$ over $[n]$, and behaves as follows:

\begin{itemize}
\item (Completeness) If $p$ is non-decreasing, then $\TND$ outputs ``yes'' with probability at least $1-\delta$;

\item (Soundness) If $p$ is $\eps$-far from non-decreasing, then $\TND$ outputs ``yes'' with probability at most $\delta$.
\end{itemize}

\noindent Let $\TNI$ denote the analogous algorithm for testing whether a $k$-modal distribution
over $[n]$ is non-increasing (we will need both algorithms). The description and proof of correctness for $\TND$ is postponed to the following subsection (Section~\ref{ssec:test}).

\subsection{Algorithm {\tt Learn-kmodal} and its analysis} \label{ssec:learn-main}

Algorithm {\tt Learn-kmodal} is given below with its analysis following.

\smallskip

\hskip-.2in \framebox{
\medskip \noindent \begin{minipage}{16.5cm}

\medskip

{\tt Learn-kmodal}

\smallskip

\noindent {\bf Inputs:} $\eps> 0$; sample access to $k$-modal
distribution $p$ over $[n]$

\smallskip

\begin{enumerate}

\item Fix $\tau:=\eps/(100k)$. Draw $ r = \Theta(1/\tau^2)$ samples from $p$ and let $\wh{p}$ denote the empirical distribution.

\item Greedily partition the domain $[n]$ into $\ell$ {\em atomic intervals} $\mathcal{I}:= \{I_i\}_{i=1}^{\ell}$ as follows:
          \begin{enumerate}
          \item[(a) ]$I_1 := [1, j_1]$, where $j_1 := \min\{ j \in [n] \mid \wh{p}([1,j]) \geq \eps/(10k)\}$.
           \item[(b)] For $i \geq 1$, if $\cup_{j=1}^i I_j = [1, j_i]$, then $I_{i+1}:=[j_i+1, j_{i+1}]$, where
           $j_{i+1}$ is defined as follows: 
           \begin{itemize}
           \item If $\wh{p}([j_i+1, n]) \geq \eps/(10k)$, then $j_{i+1}: = \min\{ j \in [n] \mid \wh{p}([j_{i}+1, j]) \geq \eps/(10k)\}${.}
           \item Otherwise, $j_{i+1} := n$.
           \end{itemize}           
          \end{enumerate}

\item  Set $\tau' := \eps/(2000k)$.
Draw $r' = \Theta((\new{k^2} / \eps^3) \cdot \log(1/\tau') \log\log(1/\tau'))$ samples
$\mathbf{s}$ from $p$ to use in Steps 4-5.

\item  Run both $\TND(\eps,\tau')$ and $\TNI(\eps,\tau')$ over $p_{\cup_{i=1}^{j} I_i}$ for $j=1,2,\ldots$, to find
the leftmost atomic interval $I_{j_1}$ such that both $\TND$ and $\TNI$ return ``no'' over $p_{\cup_{i=1}^{j_1} I_i}$.

Let $I_{j_1} = [a_{j_1},b_{j_1}].$  We consider two cases:

    {\em Case 1:} If $\widehat{p}[a_{j_1},b_{j_1}]
    \geq 2 \eps/(10k)$, define $I'_{j_1}: = [a_{j_1},b_{j_1}-1]$ and
$b_{j_1}$ is a \emph{heavy} point.

   {\em Case 2:}  If $\widehat{p}[a_{j_1},b_{j_1}] < 2 \eps/(10k)$ then define $I'_{j_1}:=I_{j_1}.$

   Call $I'_{j_1}$ a \emph{negligible} interval.
  If $j_1 >1$ then define the first \emph{superinterval} $S_1$ to be $\cup_{i=1}^{j_1-1} I_i$, and
   set $a_1 \in \{\uparrow,\downarrow\}$ to be $a_1=\uparrow$ if $\TND$
   returned ``yes'' on $p_{\cup_{i=1}^{j_1-1} I_i}$ and to be $a_1=\downarrow$ if $\TNI$
   returned ``yes'' on $p_{\cup_{i=1}^{j_1-1} I_i}$.

\item Repeat Step~3 starting with the next interval $I_{j_1+1}$,
i.e., find the leftmost atomic interval $I_{j_2}$ such that
both $\TND$ and $\TNI$ return ``no'' over
$p_{\cup_{i=j_1+1}^{j_2} I_i}.$
  Continue doing this until
   all intervals through $I_{\ell}$ have been used.


   Let $S_1,\dots,S_t$ be the superintervals obtained through the above process and
   $(a_1,\dots,a_t) \in \{\uparrow,\downarrow\}^t$ be the corresponding string of bits.

\item Draw $m = \Theta(k \cdot \log (n/k) / \eps^3)$ samples $\mathbf{s}' $ from $p$.
For each superinterval $S_i$, $i\in[t]$, run $A^{a_i}$ on the conditional distribution $p_{S_i}$ of $p$ using the samples in $\mathbf{s}' \cap S_i$.
Let $\widetilde{p}_{S_i}$ be the hypothesis thus obtained.

\item Output the hypothesis $h  =  \littlesum_{i=1}^t \wh{p}(S_i) \cdot \widetilde{p}_{S_i} +
\littlesum_j \wh{p} (\{b_j\}) \cdot {\bm 1}_{b_j}.$

\end{enumerate}

\end{minipage}}

\bigskip

\noindent We are now ready to prove Theorem~\ref{thm:learnkmodal}.

\bigskip

\begin{proofof}{Theorem~\ref{thm:learnkmodal}}
Before entering into the proof we record two observations;
we state them explicitly here for the sake of the exposition.

\begin{fact} \label{fact:extreme}
Let $R \subseteq [n].$ If $p_R$ is neither non-increasing nor non-decreasing,
then $R$ contains at least one left extreme point.
\end{fact}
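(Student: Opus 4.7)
In every application inside \texttt{Learn-kmodal} the set $R$ arises as a union of consecutive atomic intervals, so I will give the argument under the assumption that $R=[a,b]$ is a contiguous sub-interval of $[n]$. Under this assumption, $p_R$ is just a positive rescaling of $p$ restricted to $R$, so $p_R$ is non-increasing (resp.\ non-decreasing) iff $p(i)\ge p(i+1)$ (resp.\ $p(i)\le p(i+1)$) for every $i\in[a,b-1]$. Hence the hypothesis of the fact is equivalent to: there exist $u,v\in[a,b-1]$ with $p(u)<p(u+1)$ (an ``up-step'') and $p(v)>p(v+1)$ (a ``down-step'').

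The main idea is to look at the earliest up-step and earliest down-step; one is strictly to the left of the other, and this sandwiches either a peak or a valley in the interior of $R$, which I will identify with a left max or left min point of $p$. Concretely, consider first the case $u<v$, where I pick $u$ to be the smallest up-step in $[a,b-1]$ and $v$ to be the smallest down-step strictly greater than $u$. By the choice of $v$, every step in $(u,v)$ is non-decreasing, so $p(u+1)\le p(u+2)\le\cdots\le p(v)$, and by choice $p(v)>p(v+1)$. Let $c:=p(v)$ and set $a':=\min\{i\in[u+1,v]:p(i)=c\}$. Minimality of $a'$ forces $p(a'-1)<c$ (either $a'=u+1$, in which case $p(a'-1)=p(u)<p(u+1)=c$, or $a'>u+1$ and $p(a'-1)<p(a')=c$ by minimality). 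Combined with $p(v+1)<c$, this shows that $[a',v]$ is a max-interval of $p$: the values are constantly $c$ on $[a',v]$ and both neighbors are strictly less than $c$. Finally, $a'\ge u+1\ge a+1\ge 2$ and $v\le b-1\le n-1$, so $[a',v]\subseteq[2,n-1]$, and $a'\in[a+1,b-1]\subseteq R$. Thus $a'$ is a left max point of $p$ lying in $R$.

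The symmetric case $v<u$ is handled identically with the roles of ``up'' and ``down'' reversed: one obtains a min-interval $[a',u]\subseteq[2,n-1]$ inside $R$ whose left endpoint is a left min point of $p$. Since left max and left min points are both left extreme points, this completes the argument.

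The only potential wrinkle is what happens at the boundary indices $a$ and $b$: the definition restricts extreme-intervals to lie in $[2,n-1]$, which is why we must verify $a'\ge 2$ and $v\le n-1$ (and the symmetric inequalities in the other case). The containment in $R$ follows because the up-step and down-step are themselves indices in $[a,b-1]$, so the peak (or valley) they bracket cannot escape $R$. No subtler case analysis is needed.
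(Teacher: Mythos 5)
Your proof is correct. The paper states Fact~\ref{fact:extreme} without any proof at all, treating it as an evident structural observation, so there is no ``paper approach'' to compare against; your argument supplies exactly the missing verification. The extraction of the max-interval $[a',v]$ (taking the earliest up-step, the earliest subsequent down-step, and then the leftmost point of the resulting plateau) matches the paper's definitions precisely, and your boundary checks $a'\ge 2$, $v\le n-1$ and the containment $a'\in R$ are all the details that actually need checking. One point in your favor worth keeping: you correctly note that the argument requires $R$ to be a contiguous interval. As literally stated for arbitrary $R\subseteq[n]$ the fact can fail (e.g.\ for a unimodal $p$ peaked at $5$, the restriction to $R=\{2,4,8\}$ is neither non-increasing nor non-decreasing in the induced order, yet $R$ contains no left extreme point), but in every invocation inside {\tt Learn-kmodal} the set $R$ is a union of consecutive atomic intervals, so your restriction loses nothing.
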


\begin{fact} \label{fact:testie}
Suppose that $R \subseteq [n]$ does not contain a left extreme point.
For any $\eps,\tau$, if $\TND(\eps,\tau)$ and $\TNI(\eps,\tau)$ are both run on $p_R$, then the probability that both calls return ``no'' is at most $\tau.$
\end{fact}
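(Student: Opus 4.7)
The plan is to reduce the statement to the completeness guarantees of $\TND$ and $\TNI$ by first establishing that, under the hypothesis on $R$, the conditional distribution $p_R$ is itself monotone. In every invocation within algorithm {\tt Learn-kmodal} the set $R$ is a union of consecutive atomic intervals, hence a contiguous sub-interval of $[n]$, and I will work under this implicit assumption.

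I first argue combinatorially that if $R = [a,b]$ is a contiguous interval containing no left extreme point of $p$, then $p_R$ is either non-increasing or non-decreasing. Proceed by contrapositive: if $p_R$ is not monotone then there exist $i<j<k$ in $R$ with either $p(i)<p(j)>p(k)$ (``peak'') or the reversed ``valley'' pattern. In the peak case, let $m^* \in [i,k]$ attain $c := \max_{j' \in [i,k]} p(j')$, and let $[a'',b'']$ be the maximal run of consecutive positions around $m^*$ on which $p$ equals $c$. Because $p(i), p(k) < c$, one checks $i < a''$ and $b'' < k$, so $[a'',b''] \subseteq [2,n-1]$; by maximality of the run together with $c$ being the max on $[i,k]$, we also get $p(a''-1) < c$ and $p(b''+1) < c$. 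Hence $[a'',b'']$ meets the definition of a max-interval of $p$, and its left endpoint $a'' \in (i,b] \subseteq R$ is a left max point, contradicting the hypothesis on $R$. The valley case is symmetric and produces a left min point in $R$.

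With monotonicity of $p_R$ in hand, the probability bound is immediate from completeness. Since $p$ is $k$-modal its conditional $p_R$ is also $k$-modal, so the testers' specifications apply. If $p_R$ is non-decreasing, completeness of $\TND(\eps,\tau)$ guarantees that $\TND$ returns ``yes'' with probability at least $1-\tau$; the joint event that both $\TND$ and $\TNI$ return ``no'' is contained in $\{\TND\text{ returns ``no''}\}$ and therefore has probability at most $\tau$. The non-increasing case is symmetric via $\TNI(\eps,\tau)$. The only delicate point is the combinatorial verification in the previous paragraph that the run $[a'',b'']$ constructed in the peak case really satisfies the formal definition of a max-interval --- in particular the containment $[a'',b''] \subseteq [2,n-1]$ and the strict drops $p(a''-1) < c$ and $p(b''+1) < c$ at both boundaries. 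Once that is in place, no union bound or concentration argument is required; the randomness enters only through a single invocation of the completeness side of the appropriate tester.
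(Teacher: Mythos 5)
Your proof is correct and follows essentially the same route as the paper's: reduce to the statement that a contiguous $R$ with no left extreme point has monotone $p_R$ (the paper's Fact~\ref{fact:extreme}), then invoke the completeness of whichever of $\TND$, $\TNI$ applies. The only difference is that you actually verify the combinatorial fact (via the maximal constant run around a maximizer), which the paper states without proof; your verification is sound.
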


\begin{proof}  By Fact~\ref{fact:extreme} $p_R$ is either non-decreasing or non-increasing. If $p_R$ is non-decreasing then $\TND$ will output ``no'' with probability at most $\tau$, and similarly, if $p_R$ is non-increasing then $\TNI$ will output ``no'' with probability at most $\tau.$
\end{proof}

Since $r = \Theta(1/\tau^2)$ samples are drawn in the first step,
\new{Fact~\ref{fact:vc} (applied for $d=1$)}  implies that with probability of failure at most $1/100$
each interval $I \subseteq [n]$ has $| \wh{p}(I) - p(I) | \leq 2\tau$. For the
rest of the proof we condition on this good event.

Since every atomic interval $I \in \mathcal{I}$ has $\wh{p}(I) \geq
\eps/(10k)$ (except potentially the rightmost one), it follows that the
number $\ell$ of atomic intervals constructed in Step~2 satisfies $\ell \leq
10 \cdot (k/\eps)$. Moreover, \new{by our conditioning}, each atomic interval $I_i$
has $p(I_i) \geq 8\eps/(100k).$

Note that in Case (1) of Step~4, if $\widehat{p}[a_{j_1},b_{j_1}] \geq 2\eps/(10k)$
then it must be the case that $\widehat{p}(b_{j_1}) \geq \eps/(10k)$ (and thus
$p(b_{j_1}) \geq 8\eps/(100k)$).  In this case, by definition of how the interval $I_{j_1}$ was formed,
we must have that $I'_{j_1}= [a_{j_1},b_{j_1}-1]$ satisfies $\widehat{p}(I'_{j_1}) < \eps/(10k)$.
So both in Case 1 and Case 2, we now have that $\widehat{p}(I'_{j_1}) \leq 2 \eps/(10k)$, and thus
$p(I'_{j_1}) \leq 22\eps/(100k)$.  Entirely similar reasoning shows that
every negligible interval constructed in Steps~4 and~5 has mass at most $22\eps/(100k)$ under $p$.

In Steps 4--5 we invoke the testers $\TNI$ and $\TND$ on
the conditional distributions of (unions of contiguous) atomic intervals.
Note that we need enough samples in every atomic interval, since otherwise
the testers provide no guarantees. We claim that with probability at least
$99/100$ over the sample $\mathbf{s}$ of Step~3, {\em each} atomic interval gets
$b = \Omega \left( \new{(k/\eps^2)} \cdot \log(1/\tau') \right)$ samples.
This follows by a standard coupon collector's argument, which we now provide.
As argued above, each atomic interval has probability mass $\Omega(\eps/k)$ under $p$.
So, we have $\ell = O(k/\eps)$ bins (atomic intervals), and we want each bin
to contain $b$ balls (samples). It is well-known~\cite{NS:60} that after taking
$\Theta(\ell \cdot \log \ell + \ell \cdot b \cdot \log \log \ell)$ samples from $p$,
with probability $99/100$ each bin will contain the desired number of balls.
The claim now follows by our choice of parameters. Conditioning on this event,
any execution of the testers $\TND(\eps,\tau')$ and $\TNI(\eps,\tau')$ 
in Steps~4 and 5 will have the guaranteed completeness and soundness properties.

In the execution of Steps~4 and~5, there are a total of at most
$\ell$ occasions when $\TND(\eps,\tau')$ and $\TNI(\eps,\tau')$
are both run over some union of contiguous atomic intervals.
By Fact~\ref{fact:testie} and a union bound,
the probability that (in any of these instances the interval does not
contain a left extreme point and yet both calls return ``no'') is at
most $(10k/\eps)\tau' \leq 1/200.$  So with failure probability at most $1/200$ for this
step, each time Step~4 identifies a group of
consecutive intervals $I_{j},\dots,I_{j+r}$ such that both $\TND$ and
$\TNI$ output ``no'', there is a left extreme point in $\cup_{i=j}^{j+r} I_i.$
Since $p$ is $k$-modal, it follows that with failure probability at
most $1/200$ there are at most $k+1$ total repetitions of
Step~4, and hence the number $t$ of superintervals obtained is at most $k+1.$

We moreover claim that with very high probability each of the $t$
superintervals $S_i$ is very close to non-increasing or non-decreasing (with
its correct orientation given by $a_i$):
\begin{claim} \label{claim:Siisright}
With failure probability at most $1/100$, each $i \in [t]$ satisfies the
following:  if $a_i=\uparrow$ then $p_{S_i}$ is $\eps$-close to a non-decreasing
distribution and if $a_i=\downarrow$ then $p_{S_i}$ is $\eps$-close to a non-increasing
distribution.
\end{claim}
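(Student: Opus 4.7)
The plan is to rely on the soundness property of the testers $\TND$ and $\TNI$. By the way Steps~4 and 5 construct the superintervals, whenever a superinterval $S_i$ is formed, the bit $a_i$ records exactly which of the two testers returned ``yes'' on input $p_{S_i}$: namely, $a_i = \uparrow$ iff $\TND(\eps,\tau')$ returned ``yes'' on $p_{S_i}$, and $a_i = \downarrow$ iff $\TNI(\eps,\tau')$ returned ``yes.'' The claim therefore reduces to showing that whenever one of the testers returned ``yes'' on some $p_{S_i}$, the distribution $p_{S_i}$ is actually $\eps$-close to the corresponding direction, except on an event of total probability at most $1/100$.

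This is exactly the contrapositive of the soundness guarantees of the two testers: $\TND(\eps,\tau')$ outputs ``yes'' on a distribution that is $\eps$-far from non-decreasing with probability at most $\tau'$, and similarly for $\TNI(\eps,\tau')$. Therefore, the only way the conclusion of the claim can be violated for some $S_i$ is if the specific invocation of $\TND$ or $\TNI$ that set $a_i$ actually suffered a soundness failure. (We already conditioned, in the paragraph preceding the claim in the main proof, on the event that each atomic interval received $\Omega((k/\eps)^2 \log(1/\tau'))$ samples, which is enough for the testers' stated guarantees to apply to every invocation throughout the algorithm.)

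It remains to union-bound the soundness failure probabilities over all tester invocations performed during Steps~4--5. Each iteration of the incremental process appends one more atomic interval to the current candidate union and then runs $\TND(\eps,\tau')$ and $\TNI(\eps,\tau')$ once each, so the total number of tester calls across all iterations is at most $2\ell \leq 20k/\eps$. With $\tau' = \eps/(2000k)$, the union bound yields a total soundness-failure probability of at most $2\ell \cdot \tau' \leq (20k/\eps)\cdot \eps/(2000k) = 1/100$, as required. The main ``obstacle'' here is really just book-keeping: one must verify that the parameter $\tau'$ chosen in Step~3 is small enough to absorb a union bound over all $\Theta(k/\eps)$ possible tester invocations, which the choice $\tau' = \eps/(2000k)$ exactly achieves.
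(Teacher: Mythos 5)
Your proof is correct and follows essentially the same route as the paper's: bound the number of tester invocations by $2\ell \leq 20k/\eps$, invoke the soundness guarantee (a tester says ``yes'' on an $\eps$-far distribution with probability at most $\tau'$), and union bound using $\tau' = \eps/(2000k)$ to get total failure probability $1/100$. Your additional remarks about conditioning on each atomic interval receiving enough samples and about how the bit $a_i$ is determined are consistent with the surrounding argument in the paper.
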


\begin{proof}
There are at most $2\ell \leq 20k/\eps$ instances when either
$\TNI$ or $\TND$ is run on a union of contiguous intervals.
For any fixed execution of $\TNI$ over an interval $I$,
the probability that $\TNI$ outputs ``yes''
while $p_I$ is $\eps$-far from every non-increasing distribution over $I$
is at most $\tau'$, and similarly for $\TND.$  A union bound
and the choice of $\tau'$ conclude the proof of the claim.
\end{proof}

Thus we have established that with overall failure probability at most
$5/100$, after Step~5 the interval $[n]$ has been partitioned into:
\begin{enumerate}
\item A set $\{S_i\}_{i=1}^t$ of $t \leq k+1$ superintervals, with $p(S_i) \geq 8\eps/(100k)$
and $p_{S_i}$ being $\eps$-close to either non-increasing or non-decreasing according to the value of bit $a_i$.

\item  A set  $\{I'_i\}_{i=1}^{t'}$ of $t' \leq k+1$ negligible intervals, such that $p(I'_i) \leq 22 \eps/(100k)$.

\item A set $\{b_i\}_{i=1}^{t''}$ of $t'' \leq k+1$ heavy points, each with $p(b_i) \geq 8 \eps/(100k).$
\end{enumerate}
\noindent We condition on the above good events, and bound from above the expected total variation distance (over the sample $\mathbf{s}'$). In particular, we have the following lemma:

\begin{lemma} \label{lemma:exp-dtv}
{Conditioned on the above good events 1--3,} we have that $\E_{\mathbf{s}'}\left[\dtv(h,p)\right] {=} O(\eps).$
\end{lemma}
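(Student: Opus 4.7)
The plan is to decompose the signed measure $h - p$ according to the tripartition of $[n]$ into superintervals, negligible intervals, and heavy points, and then bound each contribution separately. Writing
\[
h - p = \sum_{i=1}^t \bigl[\wh{p}(S_i)\widetilde{p}_{S_i} - p(S_i) p_{S_i}\bigr] - \sum_{i=1}^{t'} p(I'_i) \, p_{I'_i} + \sum_{j=1}^{t''}\bigl[\wh{p}(\{b_j\}) - p(\{b_j\})\bigr] {\bm 1}_{b_j},
\]
and using the elementary identity $\|\alpha q - \beta q'\|_1 \leq |\alpha - \beta| + 2\min(\alpha,\beta)\cdot\dtv(q,q')$ for probability distributions $q,q'$, the triangle inequality yields
\[
\dtv(h,p) \leq \tfrac{1}{2}\sum_{i=1}^t |\wh{p}(S_i) - p(S_i)| + \sum_{i=1}^t p(S_i) \cdot \dtv(\widetilde{p}_{S_i}, p_{S_i}) + \tfrac{1}{2}\sum_{i=1}^{t'} p(I'_i) + \tfrac{1}{2}\sum_{j=1}^{t''} |\wh{p}(b_j) - p(b_j)|.
\]

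First I would dispose of the easy terms. The DKW event (already conditioned on) gives each weight-estimation error at most $2\tau = \eps/(50k)$, and since there are at most $3(k+1)$ such weights, these contribute $O(k\tau) = O(\eps)$ total. Each of the $t' \leq k+1$ negligible intervals has $p(I'_i) \leq 22\eps/(100k)$, contributing $O(\eps)$ in total. These bounds are deterministic (given our conditioning), so the whole expectation over $\mathbf{s}'$ is inherited by the superinterval term.

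The main work is to bound the superinterval contribution $\E_{\mathbf{s}'}\bigl[\sum_i p(S_i) \cdot \dtv(\widetilde{p}_{S_i}, p_{S_i})\bigr]$. Let $m_i = |\mathbf{s}' \cap S_i|$. Conditioned on $m_i$, the samples in $\mathbf{s}' \cap S_i$ are i.i.d.\ from $p_{S_i}$, which by Claim~\ref{claim:Siisright} is $\eps$-close to a monotone distribution whose orientation matches $a_i$. Theorem~\ref{thm:birge-monotone} (semi-agnostic Birgé) then gives
\[
\E\bigl[\dtv(\widetilde{p}_{S_i}, p_{S_i}) \,\big|\, m_i\bigr] \leq 2\eps + O\bigl((\log |S_i|/(m_i+1))^{1/3}\bigr).
\]
Since every superinterval satisfies $p(S_i) \geq 8\eps/(100k)$ and $m = \Theta(k\log(n/k)/\eps^3)$, we have $m \cdot p(S_i) = \Omega(\log(n/k)/\eps^2)$, so a Chernoff bound gives $m_i \geq \tfrac{1}{2} m \cdot p(S_i)$ except with negligible probability; the rare failure contributes only $o(\eps)$ since $\dtv \leq 1$. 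Taking expectations and summing,
\[
\sum_{i=1}^t p(S_i) \cdot \E[\dtv(\widetilde{p}_{S_i}, p_{S_i})] \leq 2\eps \cdot \sum_i p(S_i) + O(m^{-1/3}) \cdot \sum_i p(S_i)^{2/3} (\log|S_i|)^{1/3} + o(\eps).
\]

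The final step is an application of H\"older's inequality with exponents $3/2$ and $3$: since $\sum_i p(S_i) \leq 1$, $\sum_i |S_i| \leq n$, and $t \leq k+1$, concavity of $\log$ gives
\[
\sum_i p(S_i)^{2/3} (\log|S_i|)^{1/3} \leq \Bigl(\sum_i p(S_i)\Bigr)^{2/3} \Bigl(\sum_i \log|S_i|\Bigr)^{1/3} \leq (t \log(n/t))^{1/3} = O\bigl((k\log(n/k))^{1/3}\bigr).
\]
Plugging in $m = \Theta(k\log(n/k)/\eps^3)$ makes the second term $O(\eps)$, so the superinterval contribution is $O(\eps)$, completing the bound. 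The main obstacle is this last step: controlling $\E[(m_i+1)^{-1/3}]$ requires the lower bound $p(S_i) \geq 8\eps/(100k)$ (which is why it was important to carve out the negligible intervals and heavy points in Steps 4--5), and then combining the resulting $p(S_i)^{2/3} (\log|S_i|)^{1/3}$ terms via H\"older is what allows the $\log(n/k)$ rather than $\log n$ factor in the final sample complexity.
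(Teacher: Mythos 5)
Your proposal is correct and follows essentially the same route as the paper: the same mixture decomposition into superintervals, negligible intervals, and heavy points; the same disposal of the weight-estimation and negligible-mass terms via the DKW conditioning; the same application of the semi-agnostic Birg\'e guarantee conditioned on $m_i$; and the same H\"older step with conjugates $3/2$ and $3$ together with $\sum_j \log n_j \leq t\log(n/t)$ to extract the $(k\log(n/k))^{1/3}$ factor. The one place you genuinely diverge is in controlling $\E_{\mathbf{s}'}[(m_i+1)^{-1/3}]$. The paper does this unconditionally via Claim~\ref{claim:binomial}: Jensen's inequality reduces it to $\E[1/(X+1)]$ for $X\sim\mathrm{Bin}(m,q)$, which has the closed form $\frac{1-(1-q)^{m+1}}{q(m+1)} < \frac{1}{mq}$, giving $\E[(m_i+1)^{-1/3}] < (m\,p(S_i))^{-1/3}$ with no concentration argument and no side conditions. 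Your Chernoff route reaches the same bound up to constants, but it carries a hidden hypothesis: the failure probability is $\exp(-\Omega(m\,p(S_i))) = (n/k)^{-\Omega(1/\eps^2)}$, which is $o(\eps)$ only when $n/k$ is bounded away from $1$; in the degenerate regime $k = (1-o(1))n$ the exponent vanishes and your ``negligible probability'' claim needs a separate justification. This is a minor point in the parameter regime the theorem targets, but it is exactly the kind of edge case the paper's exact computation sidesteps for free, so it is worth knowing the cleaner argument exists.
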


\begin{proofof}{Lemma~\ref{lemma:exp-dtv}}
By the discussion preceding the lemma statement,
the domain $[n]$ has been partitioned into a set of superintervals, a set of negligible intervals and a set of heavy points. As a consequence, we can write
$$p  =  \littlesum_{j=1}^t p(S_j) \cdot p_{S_j} + \littlesum_{j=1}^{t''} p (\{b_j\}) \cdot {\bm 1}_{b_j} + \littlesum_{j=1}^{t'} p (I'_j) \cdot p_{I'_j} .$$
Therefore, we can bound the total variation distance as follows:
\begin{eqnarray*}
\dtv(h,p) \leq  \littlesum_{j=1}^{t} |  \wh{p}(S_j)  - p(S_j) | +
\littlesum_{j=1}^{t''} |  \wh{p}(b_j) - p(b_j)| \\ +
\littlesum_{j=1}^{t'} p (I'_j) +
\littlesum_{j=1}^{t} p(S_j) \cdot \dtv (\widetilde{p}_{S_j}, p_{S_j}).
\end{eqnarray*}
Recall that each term in the first two sums is bounded from above by $2\tau$. Hence, the contribution of these terms to the RHS is at most $2\tau\cdot (2k+2) \leq \eps/10$. Since each negligible interval $I'_j$ has $p(I'_j) \leq 22\eps/(100k)$, the contribution of the third sum is at most $t' \cdot 22 \eps/(100k) \leq \eps/4$. It thus remains to bound the contribution of the last sum.

We will show that
$$
\E_{\mathbf{s}'} \left[  \littlesum_{j=1}^{t} p(S_j) \cdot \dtv (\widetilde{p}_{S_j}, p_{S_j}) \right] {=} O(\eps).
$$

Denote $n_i = |S_i|$. Clearly,  $\sum_{i=1}^{t} n_i \leq n$.
Since we are conditioning on the good events (1)-(3), each superinterval is
$\eps$-close to monotone with a known orientation (non-increasing or
non-decreasing) given by $a_i$.  Hence we may apply Theorem~\ref{thm:birge-monotone} for each superinterval.

Recall that in Step 5 we draw a total of $m$ samples. Let $m_i$, $i \in [t]$ be the number of samples that land in $S_i$;
observe that $m_i$ is a binomially distributed random variable with
$m_i \sim \mathrm{Bin}(m, p(S_i))$.
We apply Theorem~\ref{thm:birge-monotone} for each $\eps$-monotone interval,
conditioning on the value of $m_i$, and get
$$ \dtv (\widetilde{p}_{S_i}, p_{S_i})  \leq 2 \eps +  O\left( (\log n_i / (m_i+1))^{1/3} \right).$$ Hence, we can bound from above
the desired expectation as follows
\begin{eqnarray*}
\littlesum_{j=1}^{t} p(S_j) \cdot \E_{\mathbf{s}'} \left[ \dtv (\widetilde{p}_{S_j}, p_{S_j}) \right]
\leq \left(\littlesum_{j=1}^{t} 2 \eps \cdot p(S_j) \right) + \\
O\left(\littlesum_{j=1}^{t} p(S_j) \cdot  (\log n_j)^{1/3} \cdot \E_{\mathbf{s}'} [ (m_j+1)^{-1/3}]\right).
\end{eqnarray*}
Since $\littlesum_jp(S_j) \leq 1$, to prove the lemma, it suffices to show that the second term is bounded, i.e.,
that
$$\littlesum_{j=1}^{t} p(S_j) \cdot  (\log n_j)^{1/3} \cdot \E_{\mathbf{s}'} [ (m_j+1)^{-1/3}] = O(\eps).$$
To do this, we will first need the following claim:
\begin{claim} \label{claim:binomial}
For a binomial random variable $X \sim \mathrm{Bin}(m,q)$ it holds $\E [ (X+1)^{-1/3} ] <(mq)^{-1/3}.$
\end{claim}

\begin{proof} 
Jensen's inequality 
implies that $$\E [(X+1)^{-1/3}] \leq  (\E [1/(X+1)])^{1/3}.$$
We claim that $\E [1/(X+1)] < 1/\E[X]$. This can be shown as follows:
We first recall that $\E[X] = m \cdot q$. For the expectation of the inverse,
we can write:
\begin{eqnarray*}
&& \E\left[1/(X+1)\right]  = \\
&=& \littlesum_{j=0}^m {\frac{1}{j+1} \binom{m}{j}q^j (1-q)^{m-j}} \\
&=& \frac{1}{m+1}
\cdot \littlesum_{j=0}^m {\binom{m+1}{j+1} q^j (1-q)^{m-j}}\\
& = & \frac{1}{q \cdot (m+1)} \cdot \littlesum_{i=1}^{m+1} {\binom{m+1}{i} q^i (1-q)^{m+1-i}} \\
&=& \frac{1 - (1-q)^{m+1}}{q \cdot (m+1)}  < \frac{1}{m \cdot q}.
\end{eqnarray*}
The claim now follows by the monotonicity of the mapping $x \mapsto x^{1/3}$.
\end{proof}

By Claim~\ref{claim:binomial}, applied to $m_i  \sim \mathrm{Bin}(m,p(S_i))$, we have that
$ \E_{\mathbf{s}'} [ (m_i+1)^{-1/3}] < m^{-1/3} \cdot (p(S_i))^{-1/3}.$
Therefore, our desired quantity can be bounded from above by
$$\littlesum_{j=1}^{t} \frac {p(S_j) \cdot  (\log n_j)^{1/3}} {m^{1/3} \cdot (p(S_j)) ^{1/3}} =
O(\eps) \cdot \littlesum_{j=1}^{t}{(p(S_j))^{2/3} \cdot \left(\frac{\log n_j}{ k \cdot \log(n/k)}\right)^{1/3}} .$$
We now claim that the second term in the RHS above is upper bounded by $2$.
Indeed, this follows by an application of H\"older's inequality
for the vectors $(p(S_j)^{2/3})_{j=1}^{t}$ and
$((\frac{\log n_j} { k \cdot \log(n/k)})^{1/3})_{j=1}^{t}$,
with  H\"older conjugates $3/2$ and $3$. That is,
\begin{eqnarray*}
&& \littlesum_{j=1}^{t} \left(p(S_j)\right)^{2/3} \cdot \left(\frac{\log n_j}{ k \cdot \log(n/k)}\right)^{1/3} \leq \\
 &\leq& \left( \littlesum_{j=1}^{t} p(S_j) \right)^{2/3} \cdot 
\left( \littlesum_{j=1}^{t} \frac{\log n_j}{ k \cdot \log(n/k)} \right)^{1/3} \\ 
&\leq& 2.
\end{eqnarray*}
The first inequality is  H\"older and the second uses the fact that  $\sum_{j=1}^{t} p(S_j) \leq 1$ and
$\sum_{j=1}^t \log (n_j) \leq t \cdot \log(n/t) \leq (k+1) \cdot \log(n/k)$.
This last inequality is a consequence of the concavity of the logarithm and the fact that $\sum_j n_j \leq n$. This completes the proof of the lemma.
\end{proofof}

By applying Markov's inequality and a union bound, we get that with
probability $9/10$ the algorithm {\tt Learn-kmodal}
outputs a hypothesis $h$ that has $\dtv(h,p) {=} O(\eps)$ as required.

It is clear that the algorithm has the claimed sample complexity.
The running time is also easy to analyze, as it is easy to see that
every step can be performed in polynomial time 
in the sample size. This completes the proof of Theorem~\ref{thm:learnkmodal}.
\end{proofof}

\newpage

\subsection{Testing whether a $k$-modal distribution is monotone}
\label{ssec:test}

In this section we describe and analyze the testing  algorithm $\TND$.
Given sample access to a $k$-modal distribution $q$ over $[n]$ and $\tau>0$,
our tester $\TND$ uses \new{$O(k/\tau^2)$} many samples from $q$ and has the
following properties:

\begin{itemize}
\item If $q$ is non-decreasing, $\TND$ outputs ``yes'' with probability at least $2/3$.
\vspace{-0.1cm}
\item If $q$ is $\tau$-far from non-decreasing, $\TND$ outputs ``no'' with probability at least $2/3$.
\end{itemize}
\noindent (The algorithm  $\TND(\tau, \delta)$ is obtained by repeating $\TND$ $O(\log(1/\delta))$ times and taking the majority vote.)

\medskip

\new{ 
Before we describe the algorithm we need some notation. Let $q$ be a distribution over $[n]$. 
For $a \le b < c \in [n]$ define $$E(q, a, b, c) : = \frac{q([a,b])}{(b-a+1)} - \frac{q([b+1,c])}{(c-b)}.$$
We also denote $$T(q, a, b, c) : = \frac{E(q, a, b, c)}{\frac{1}{(b-a+1)} + \frac{1}{(c-b)}}.$$
}
{Intuitively, the quantity $E(q,a,b,c)$ captures the difference
between the average value of $q$ over $[a,b]$ versus over $[b+1,c]$;
it is negative iff the average value of $q$ is higher over $[b+1,c]$
than it is over $[a,b]$.  The quantity $T(q,a,b,c)$ is a scaled
version of $E(q,a,b,c)$.}

\new{The idea behind tester $\TND$ is simple.  It is based on the
observation that if $q$ is a non-decreasing distribution,
then for any two consecutive intervals $[a,b]$ and $[b+1,c]$
the average of $q$ over $[b+1,c]$ must be at least as large
as the average of $q$ over $[a,b]$.  Thus any non-decreasing
distribution will pass a test that checks ``all'' pairs of
consecutive intervals looking for a violation. Our \newer{tester $\TND$ checks ``all'' sums of (at most) $k$ consecutive intervals looking for a violation.} Our analysis
shows that in fact such a test is complete as well as sound if the
distribution $q$ is guaranteed to be $k$-modal.  The key ingredient
is a structural result (Lemma~\ref{lem:tau-far-structure} below), which is proved using a procedure
reminiscent of ``Myerson ironing''~\cite{Myerson:81} to convert a $k$-modal
distribution to a non-decreasing distribution.}

\bigskip

\hskip-.2in \framebox{
\medskip \noindent \begin{minipage}{16.5cm}

{\tt Tester  $\TND(\tau)$}

\noindent {\bf Inputs:} $\tau > 0$; sample access to $k$-modal
distribution $q$ over $[n]$

\begin{enumerate}

\item \new{Draw $r = \Theta(k/\tau^2)$} samples
$\mathbf{s}$ from $q$ and let $\wh{q}$ be the resulting
empirical distribution.


\item If there exists \new{$\ell \in [k]$ and} \new{$\{a_i, b_i, c_i\}_{i=1}^{\ell} \in  \mathbf{s} \cup \{n\}$ with $a_i \le b_i < c_i < a_{i+1}$, $i \in [\ell-1]$,}
such that
\new{
\begin{eqnarray} \label{eqn:estimator}
\littlesum_{i=1}^{\ell} T(\wh{q}, a_i, b_i, c_i \newer{-1}) \ge \tau/4
\end{eqnarray}
}
then output ``no'', otherwise output ``yes''.
\end{enumerate}

\ignore{
\rnote{Setting up the test this way, where step 2 deals only with points
in the sample $\cup \{1,n\}$, makes the runtime analysis easier
but seems to make the analysis a little hairier, see the long rnote below.
Another option would be to have the test talk about all triples
$a\leq b < c \in [1,n]$ and then in the runtime analysis just say
that it's easy to see that if there exist any such 3 points $a,b,c$,
then there must exist 3 such points all of which are either in
$\mathbf{s} \cup \{1,n\}$ or are neighbors of those points.}
}

\end{minipage}}

\bigskip

The following theorem establishes correctness of the tester.

\begin{theorem} \label{thm: tester-correctness}
The algorithm $\TND$ uses \new{$O(k/\tau^2)$} samples from $q$, performs $\poly(k/\tau) \cdot \log n$ bit operations and satisfies the desired completeness and soundness properties.
\end{theorem}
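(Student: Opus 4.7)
\noindent The sample complexity is immediate since $r=\Theta(1/\delta^2)=\Theta(k^2/\tau^2)$. For the running time, the tester enumerates all ordered triples $(a,b,c)$ with $a\le b<c$ drawn from $\mathbf{s}\cup\{1,n\}$, a set of size $O(k^2/\tau^2)$, yielding $O((k/\tau)^6)$ triples; for each, computing $\wh q([a,b])$ and $\wh q([b+1,c])$ takes $\poly(k/\tau)$ arithmetic operations on $O(\log n)$-bit integers, for a total of $\poly(k/\tau)\cdot\log n$ bit operations. For completeness, I would apply the DKW inequality (Theorem~\ref{thm:DKW}) to the cdf of $q$: with the constant in $r$ chosen appropriately, $\dk(q,\wh q)\le\delta=\tau/(100k)$ holds with probability at least $2/3$, and hence $|\wh q(I)-q(I)|\le 2\delta$ for every interval $I\subseteq[n]$. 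Condition on this event. If $q$ is non-decreasing then every value on $[a,b]$ is at most every value on $[b+1,c]$, so $E(a,b,c)\le 0$ and consequently $\wh E(a,b,c)\le\frac{2\delta}{b-a+1}+\frac{2\delta}{c-b}=\frac{\tau/(50k)}{b-a+1}+\frac{\tau/(50k)}{c-b}$, which is strictly below the tester's threshold; no triple witnesses a violation and $\TND$ outputs ``yes''.

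For soundness, the key is the structural Lemma~9 referred to in the text, which I would state as: \emph{if $q$ is $k$-modal and $\tau$-far from non-decreasing, then there exist $a^*\le b^*<c^*$ with $E(a^*,b^*,c^*)\ge \frac{\tau/(2k)}{b^*-a^*+1}+\frac{\tau/(2k)}{c^*-b^*}$.} The plan for that lemma is the Myerson-style ironing argument: while the current distribution still has a pair of adjacent intervals $[a,b],[b+1,c]$ on which the left average exceeds the right, replace the distribution on $[a,c]$ by its average. With a suitable choice of pair at each step (for instance, the two intervals straddling the leftmost local maximum), one verifies that the number of modes strictly decreases, so after at most $k$ steps the process terminates at a non-decreasing distribution $q'$. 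Because $\dtv(q,q')\ge\tau$, the aggregate $L_1$-mass redistributed across the $\le k$ steps is at least $2\tau$, so some single ironing step moves mass at least $2\tau/k$; translating mass moved into a gap between the averages on the two ironed intervals yields the quantitative bound on $E$.

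Given such a witness $(a^*,b^*,c^*)$, I would snap its endpoints into $\mathbf{s}\cup\{1,n\}$: let $a$ be the smallest and $b$ the largest sample point inside $[a^*,b^*]$ (these exist since the structural bound forces $q([a^*,b^*])\ge\tau/(2k)$, so by DKW a non-trivial number of samples land there), and let $c$ be the largest sample point in $[b^*+1,c^*]$, falling back on $n$ in the degenerate case where that interval contains no samples. Each snap leaves the empirical mass of the corresponding interval unchanged while making $b-a+1$ no larger and $c-b$ no smaller, so $\wh E(a,b,c)\ge \wh E(a^*,b^*,c^*)$; coupling this with the DKW bound $\wh E(a^*,b^*,c^*)\ge E(a^*,b^*,c^*)-\frac{2\delta}{b^*-a^*+1}-\frac{2\delta}{c^*-b^*}$ and the structural lemma's $\tau/(2k)$-scale lower bound, the constants work out so that $\wh E(a,b,c)$ exceeds the threshold $\frac{\tau/(4k)}{b-a+1}+\frac{\tau/(4k)}{c-b}$, and $\TND$ outputs ``no''. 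I expect the main obstacle to be the ironing lemma itself: one must schedule the ironings so that the modality is monotonically non-increasing over the $\le k$ steps, and extract from ``mass moved $\ge 2\tau/k$ in one step'' the claimed joint scaling $\frac{1}{b^*-a^*+1}+\frac{1}{c^*-b^*}$ for $E$. The DKW and snapping ingredients, while requiring some bookkeeping for degenerate cases (e.g., when $[b^*+1,c^*]$ contains no samples), are routine by comparison.
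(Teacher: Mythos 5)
Your accounting of the sample complexity and running time, and your completeness argument, match the paper's, and your statement of the structural lemma (Lemma~\ref{lem:tau-far-structure}) is exactly what the paper proves. But the ironing lemma you defer is the heart of the soundness proof, and your sketch of it is missing the one idea that makes it work. ``Replace the distribution on $[a,c]$ by its average'' for an arbitrary violating pair of adjacent intervals does not obviously reduce the number of modes, and the mass it moves does not obviously translate into the form $\frac{\tau'}{b-a+1}+\frac{\tau'}{c-b}$ for $E$ of that triple. The paper resolves both issues at once by choosing the flattening \emph{height}: starting from the leftmost mode $j$ (say a local maximum), it lowers a horizontal line $y=h$ and uses the intermediate value theorem to find $h_0$ at which the mass cut off above the line inside $I(h_0)=[a(h_0),b(h_0)]$ exactly equals the mass missing below the line on $J(h_0)=[b(h_0)+1,c(h_0)]$. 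This choice guarantees (i) the flattened distribution is non-decreasing up to $c(h_0)$ (the new constant value $h_0$ dominates $q(a(h_0)-1)$), so the mode count strictly drops, and (ii) the exact identity $E(a(h_0),b(h_0),c(h_0))=\frac{\tau'}{b(h_0)-a(h_0)+1}+\frac{\tau'}{c(h_0)-b(h_0)}$, where $\tau'$ is the mass moved. Your forward accounting (total $L_1$ displacement at least $2\tau$ over at most $k$ steps, so some step moves at least $\tau/k$) is a fine reformulation of the paper's contrapositive, but without pinning down the ironing step as above the lemma is unproven.

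Your snapping of $c^*$ also goes in the wrong direction. You take $c$ to be the largest sample point \emph{inside} $[b^*+1,c^*]$ and assert that $c-b$ is ``no smaller'' than $c^*-b^*$; this is false in general. If, say, all of the mass of $[b^*+1,c^*]$ sits at $b^*+1$, then every sample from that interval lands at $b^*+1$, so $c=b^*+1$ and $c-b$ can equal $1$ while $c^*-b^*$ is huge. Since $\wh{q}([b+1,c])=\wh{q}([b^*+1,c^*])$ but the denominator shrinks, the subtracted term $\wh{q}([b+1,c])/(c-b)$ can blow up and $\wh{E}(a,b,c)$ can fall below the threshold even though $E(a^*,b^*,c^*)$ is large, so your chosen triple need not witness a violation. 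The paper instead takes $s_c$ to be the leftmost sample point in $[c^*+1,n]$ (or $n$ if that region is empty of samples), so that the denominator can only grow relative to $c^*-b^*$.
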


\begin{proof}
We start by showing that the algorithm has the claimed completeness and soundness
properties.
Let us say that the sample $\mathbf{s}$ is \emph{good} \new{if 
for every collection $\mathcal{I}$ of (at most) $3k$ intervals in $[n]$ it holds 
$$\littlesum_{I \in \mathcal{I}} |q(I) - \wh{q}(I) | \le \tau/20.$$ By Fact~\ref{fact:vc}
with probability at least $2/3$ the sample $\mathbf{s}$ is good.
We henceforth condition on this event.
}  

\new{ 

For $a \leq b  < c \in [n]$ let us denote $\gamma = \left| q([a,b]) - \wh{q}([a,b])\right|$ and $\gamma' = \left| q([b+1,c]) - \wh{q}([b+1,c])\right|$. 
Then we can write
$$
|E(q, a,b,c) - E(\wh{q}, a,b,c)| \le \frac{\gamma}{b-a+1} +  \frac{\gamma'}{c-b} \le (\gamma+\gamma') \cdot \left( \frac{1}{b-a+1} +  \frac{1}{c-b} \right)
$$
which implies that
\begin{equation} \label{eqn:single-term}
|T(q, a,b,c) - T(\wh{q}, a,b,c)| \le \gamma + \gamma'.
\end{equation}
}
\new{Now consider any $\{a_i, b_i, c_i\}_{i=1}^{\ell} \in  [n]$, for some $\ell \le k$, with $a_i \le b_i < c_i < a_{i+1}$, $i \in [\ell-1]$. 
Similarly denote $\gamma_i = \left| q([a_i,b_i]) - \wh{q}([a_i,b_i])\right|$ and $\gamma'_i = \left| q([b_i+1,c_i]) - \wh{q}([b_i+1,c_i])\right|$. 
With this notation we have
\[ 
\left|  \littlesum_{i=1}^{\ell} T(q, a_i, b_i, c_i)  -  \littlesum_{i=1}^{\ell} T(\wh{q}, a_i, b_i, c_i) \right| 
\le  \littlesum_{i=1}^{\ell} |T(q, a_i,b_i,c_i) - T(\wh{q}, a_i,b_i,c_i)| \le \littlesum_{i=1}^{\ell} (\gamma_i +\gamma'_i)
\]
where we used the triangle inequality and (\ref{eqn:single-term}). 
Note that the rightmost term is the sum of the ``additive errors'' for the collection $\{ [a_i, b_i], [b_i+1, c_i] \}_{i=1}^{\ell}$ of $2\ell$ intervals.
Hence, it follows from our conditioning that the last term is bounded from above by $\tau/20$, i.e., 
\begin{equation} \label{eqn:all-terms}
\left|  \littlesum_{i=1}^{\ell} T(q, a_i, b_i, c_i)  -  \littlesum_{i=1}^{\ell} T(\wh{q}, a_i, b_i, c_i) \right| 
\le \tau/20.
\end{equation}
}

\smallskip
We first establish completeness. Suppose that $q$ is non-decreasing. Then the average probability
value in any interval $[a,b]$ is a non-decreasing function of $a$.
That is, {\em for all} $a \leq b < c \in [n]$ it holds
$E(q, a,b,c) \leq 0$, \new{hence $T(q,a,b,c) \leq 0$. This implies that for any choice of $\{a_i, b_i, c_i\}_{i=1}^{\ell} \in  [n]$ with $a_i \le b_i < c_i < a_{i+1}$, we will have 
$\littlesum_{i=1}^{\ell} T(q, a_i, b_i, c_i) \le 0$. By (\ref{eqn:all-terms}) we now get that $$\littlesum_{i=1}^{\ell} T(\wh{q}, a_i, b_i, c_i) \le \tau/20,$$ i.e., the tester 
says ``yes'' with probability at least $2/3$.
} 




\smallskip

To prove soundness, we will crucially need the following structural lemma:
\new{
\begin{lemma} \label{lem:tau-far-structure}
Let $q$ be a $k$-modal distribution over $[n]$ that is $\tau$-far from being non-decreasing. Then {\em there exists} 
$\ell \in [k]$ and $\{a_i, b_i, c_i\}_{i=1}^{\ell} \subseteq [n]^{3\ell}$ with $a_i \le b_i < c_i < a_{i+1}$, $i \in [\ell-1]$, such that 
\begin{equation} \label{eqn:str}
\littlesum_{i=1}^{\ell} T(q, a_i, b_i, c_i) \ge \tau/2.
\end{equation}

\end{lemma}
}

We first show how the soundness  follows from the lemma.
\new{
Let $q$ be a $k$-modal distribution over $[n]$ that is $\tau$-far from non-decreasing.
Denote $\mathbf{s}' := \mathbf{s} \cup \{n\} = \{s_1, s_2, \ldots, s_{r'}\}$ with $r' \le r+1$ and 
$s_j < s_{j+1}.$ We want to show that there exist points in $\mathbf{s}'$ that satisfy (\ref{eqn:estimator}). Namely, that
there exists $\ell \in [k]$ and $\{ s_{a_i}, s_{b_i}, s_{c_i} \}_{i=1}^{\ell} \in \mathbf{s}'$ with $s_{a_i} \le s_{b_i} < s_{c_i} < s_{a_{i+1}}$, $i \in [\ell-1]$, 
such that \newer{
\begin{equation} \label{eqn:final-goal2}
\littlesum_{i=1}^{\ell} T(\wh{q}, s_{a_i}, s_{b_i}, s_{c_i} \newer{-1}) \ge \tau/4.
\end{equation}
}
By Lemma~\ref{lem:tau-far-structure}, there exists
$\ell \in [k]$ and $\{a_i, b_i, c_i\}_{i=1}^{\ell} \in [n]$ with $a_i \le b_i < c_i < a_{i+1}$, $i \in [\ell-1]$, 
such that $\littlesum_{i=1}^{\ell} T(q, a_i, b_i, c_i) \ge \tau/2.$  
\newer{Combined with (\ref{eqn:all-terms}) the latter {inequality} 
implies that
\begin{equation} \label{eqn:sound}
\littlesum_{i=1}^{\ell} T(\wh{q}, a_i, b_i, c_i) \ge \tau/2 - \tau/20 > \tau/4.
\end{equation} 
First note that it is no loss of generality to assume that $\wh{q}([a_i, b_i])>0$ for all $i \in [\ell]$.
(If there is some $j \in [\ell]$ with $\wh{q}([a_j, b_j])=0$, then by definition we have $T(\wh{q}, a_j, b_j, c_j) \le 0$; hence, 
we can remove this term from the above sum and the RHS does not decrease.)

Given the domain points $\{a_i, b_i, c_i\}_{i=1}^{\ell}$ we define the sample points $s_{a_i}, s_{b_i}, s_{c_i}$ such that:
\begin{enumerate}
\item[(i)] $[s_{a_i}, s_{b_i}] \subseteq [a_i, b_i]$, 
\item[(ii)] $[s_{b_i}+1, s_{c_i}-1] \supseteq [b_i+1, c_i]$, 
\item[(iii)] $\wh{q}([s_{a_i}, s_{b_i}]) =  \wh{q}([a_i, b_i])$ and 
\item[(iv)] $\wh{q}([s_{b_i}+1, s_{c_i}-1]) =  \wh{q}([b_i+1, c_i])$.
\end{enumerate}
To achieve these properties we select:
\begin{itemize}
\item $s_{a_i}$ to be the leftmost point of the sample in $[a_i, b_i]$; $s_{b_i}$ to be the rightmost
point of the sample in $[a_i, b_i]$. Note that by our assumption that $\wh{q}([a_i, b_i])>0$ at least one sample falls in $[a_i, b_i]$. 

\item $s_{c_i}$ to be the leftmost point of the sample in $[c_i+1, n]$; or the point $n$ if $[c_i+1, n]$ has no samples or is empty.
\end{itemize}
We can rewrite (\ref{eqn:sound}) as follows:
\begin{equation} \label{eqn:sound2}
\sum_{i=1}^{\ell} \frac{\wh{q}([a_i, b_i])}{1+ \frac{b_i-a_i+1}{c_i-b_i}} \ge \tau/4 + \sum_{i=1}^{\ell} \frac{\wh{q}([b_i+1, c_i])}{1+ \frac{c_i-b_i}{b_i-a_i+1}}.
\end{equation}
Now note that by properties (i) and (ii) above it follows that $b_i-a_i+1\ge s_{b_i} - s_{a_i}+1$ and $c_i-b_i \le s_{c_i} - s_{b_i}-1$. Combining with properties
(iii) and (iv) we get 
\begin{equation} \label{eqn:useful-ena}
\frac{\wh{q}([a_i, b_i])}{1+ \frac{b_i-a_i+1}{c_i-b_i}} = \frac{\wh{q}([s_{a_i}, s_{b_i}])}{1+ \frac{b_i-a_i+1}{c_i-b_i}} \le  
\frac{\wh{q}([s_{a_i}, s_{b_i}])}{1+ \frac{s_{b_i}-s_{a_i}+1}{s_{c_i}-s_{b_i}-1}}
\end{equation}
and similarly
\begin{equation} \label{eqn:useful-dyo}
\frac{\wh{q}([b_i+1, c_i])}{1+ \frac{c_i-b_i}{b_i-a_i+1}} = \frac{\wh{q}([s_{b_i}+1, s_{c_i}-1])}{1+ \frac{c_i-b_i}{b_i-a_i+1}} \ge  
\frac{\wh{q}([s_{b_i}+1, s_{c_i}-1])}{1+ \frac{s_{c_i}-s_{b_i}-1}{s_{b_i}-s_{a_i}+1}}.
\end{equation}
A combination of (\ref{eqn:sound2}), (\ref{eqn:useful-ena}), (\ref{eqn:useful-dyo}) yields the
desired result (\ref{eqn:final-goal2}).

}

}

\medskip

\noindent It thus remains to prove Lemma~\ref{lem:tau-far-structure}.

\medskip

\begin{proofof}{Lemma~\ref{lem:tau-far-structure}}
We will prove the contrapositive. \new{ Let $q$ be a $k$-modal distribution over $[n]$ such that
for any $\ell \le k$ and $\{a_i, b_i, c_i\}_{i=1}^{\ell} \subseteq [n]^{3\ell}$ such that $a_i \le b_i < c_i < a_{i+1}$, $i \in [\ell-1]$,
we have
\begin{equation} \label{eqn:contrapositive}
\littlesum_{i=1}^{\ell} T(q, a_i, b_i, c_i) \le \tau/2.
\end{equation}
}
We will construct a non-decreasing distribution $\widetilde{q}$ that is $\tau$-close to $q$. 

{The high level idea of the argument is as follows:}
the construction of $\widetilde{q}$ proceeds in \new{(at most)} $k$ stages
where in each stage, we reduce the number of modes by at least one and incur 
small error in the total variation distance.
In particular, we iteratively construct a sequence of distributions $\{q^{(i)} \}_{i=0}^{\new{\ell}}$, $q^{(0)} = q$ and $q^{(\new{\ell})} = \widetilde{q}$,
for some $\new{\ell \le k}$, such that for all $i \in [\new{\ell}]$ we have that $q^{(i)}$ is $(k-i)$-modal and $\dtv(q^{(i-1)}, q^{(i)}) \leq \new{2\tau_i}$,
{where the quantities $\tau_i$ will be defined in the course
of the analysis below.}
\new{By appropriately using (\ref{eqn:contrapositive}), we will show that 
\begin{equation} \label{eqn:final-goal}
\littlesum_{i=1}^{\new{\ell}} \tau_i \le \tau/2.
\end{equation} 
Assuming this, it follows from 
the triangle inequality that
$$\dtv(\widetilde{q}, q ) \le \littlesum_{i=1}^{\new{\ell}} \dtv(q^{(i)}, q^{(i-1)}) \le 2 \cdot \littlesum_{i=1}^{\new{\ell}} \tau_i  \le \tau$$
as desired, where the last inequality uses (\ref{eqn:final-goal}).
}

\medskip

Consider the graph (histogram) of the discrete density $q$. The $x$-axis represents the $n$ points of the domain and the $y$-axis
the corresponding probabilities. We first informally describe how to obtain $q^{(1)}$ from $q$. The construction of \new{$q^{(i)}$ from $q^{(i-1)}$, $i \in [\ell]$,}
is \new{essentially} identical. Let $j_{\new{1}}$ be the {\em leftmost} \new{(i.e., having minimum $x$-coordinate)} 
left-extreme point (mode) of $q$, and assume that it is a local maximum with height (probability mass) $q(j_{\new{1}})$.
(A symmetric argument works for the case that it is a local minimum.)
The idea of the proof is based on the following simple process (reminiscent of Myerson's ironing process~\cite{Myerson:81}): 
We start with the horizontal line $y = q(j_{\new{1}})$ and move it downwards
until we reach a height $\new{h_1} < q(j_{\new{1}})$ so that the total mass ``cut-off'' equals the mass
``missing'' to the right; then we make the distribution ``flat'' in the corresponding interval (hence, reducing the number of modes by at least one).

We now proceed with the formal argument, assuming as above that
the leftmost left-extreme point $j_{\new{1}}$ of $q$ is a local maximum.
We say that the line $y=h$ {\em intersects} a point $i \in [n]$ in the domain of $q$ if
$q(i) \geq h$.  The line $y = h$, $h \in [0,q(j_{\new{1}})]$, intersects the graph of $q$ at a unique interval $I(h) \subseteq [n]$ that contains $j_{\new{1}}$.
Suppose $I(h) = [a(h),b(h)]$, where $a(h),b(h) \in [n]$ depend on $h$.
By definition this means that $q(a(h)) \geq h$ and $q(a(h)-1) < h$
{(since $q$ is supported on $[n],$ 
we adopt the convention that $q(0)=0$).}
Recall that the distribution $q$ is non-decreasing
in the interval $[1,j_{\new{1}}]$ and that $j_{\new{1}} \geq a(h)$.
The term ``the mass cut-off by the line $y=h$'' means
the quantity $$A(h) = q\left( I(h) \right) - h \cdot (b(h)-a(h)+1),$$
i.e., the ``mass of the interval $I(h)$ above the line.''

The height $h$ of the line $y=h$ defines the points $a(h), b(h)\in [n]$ as described above.
We consider values of $h$ such that $q$ is unimodal (increasing then decreasing) over $I(h)$.
In particular, let $j'_{\new{1}}$ be the leftmost mode of $q$ to the right of $j_{\new{1}}$,
i.e., $j'_{\new{1}}>j_{\new{1}}$ and $j'_{\new{1}}$ is a local minimum.
We consider values of $h \in (q(j'_{\new{1}}), q(j_{\new{1}}))$. For such values, the
interval $I(h)$ is indeed unimodal (as $b(h) < j'_{\new{1}}$).
For  $h \in (q(j'_{\new{1}}), q(j_{\new{1}}))$ we define the point $c(h) \geq j'_{\new{1}}$ as follows:
It is the rightmost point of the largest interval containing
$j'_{\new{1}}$ whose probability mass does not exceed $h$. That is, all points
in $[j'_{\new{1}}, c(h)]$ have probability mass at most $h$ and $q(c(h)+1) > h$
(or $c(h)=n$).

Consider the interval $J(h) = [b(h)+1, c(h)]$.
This interval is non-empty, since $b(h)< j'_{\new{1}} \leq c(h)$.
(Note that $J(h)$ is not necessarily a unimodal interval; it contains at
least one mode $j'_{\new{1}}$ \new{of $q$}, but it may also contain more modes.)
The term ``the mass missing to the right of the line $y=h$''
means the quantity $$B(h) = h \cdot (c(h)-b(h)) - q\left(J(h)\right).$$

Consider the function $C(h) = A(h) - B(h)$ over $[q(j'_{\new{1}}), q(j_{\new{1}})]$.
This function is continuous in its domain; moreover, we have that
$$C\left(q(j_{\new{1}})\right) = A\left(q(j_{\new{1}})\right) - B\left(q(j_{\new{1}})\right) <0,$$
as $A\left(q(j_{\new{1}})\right)=0$,
and $$C\left(q(j'_{\new{1}})\right) = A\left(q(j'_{\new{1}})\right) - B\left(q(j'_{\new{1}})\right) > 0,$$
as $B\left(q(j'_{\new{1}})\right)=0$. Therefore, by the intermediate value theorem,
there exists a value $h_{\new{1}} \in (q(j'_{\new{1}}), q(j_{\new{1}}))$ such that $$A(h_{\new{1}}) = B(h_{\new{1}}).$$

The distribution $q^{(1)}$ is constructed as follows: We move the
mass $\tau_1 = A(h_{\new{1}})$ from $I(h_{\new{1}})$ to $J(h_{\new{1}})$.
\new{Note that the distribution $q^{(1)}$ is identical to $q$ outside the interval $[a(h_1), c(h_1)]$, hence the leftmost
mode of $q^{(1)}$ is in $(c(h_{\new{1}}) , n]$. 
It is also clear that $$\dtv(q^{(1)}, q) \leq 2\tau_1.$$} 

\new{Let us denote $a_1= a(h_1)$, $b_1= b(h_1)$ and $c_1= c(h_1)$.}
We claim that $q^{(1)}$ has at least one mode less than $q$. Indeed,
$q^{(1)}$ is non-decreasing in $[1, a_1-1]$ and constant in $[a_1,c_1]$.
(By our ``flattening'' process, all the points in the latter interval have probability mass exactly $h_{\new{1}}$.)
Recalling that \new{$$q^{(1)}(a_1) = h_{\new{1}} \geq q^{(1)}(a_1-1) = q(a_1-1),$$}
we deduce that $q^{(1)}$ is non-decreasing in $[1,\new{c_1}]$.

We will now argue that \new{
\begin{equation} \label{eqn:tau1}
\tau_1 =  T(q, a_1, b_1, c_1).
\end{equation}
}
Recall that we have $A(h_{\new{1}})  = B(h_{\new{1}}) = \new{\tau_1}$, which can be written as
\new{
\begin{eqnarray*}
q([a_1, b_1])  - h_1 \cdot (b_1-a_1+1)  = h_1 \cdot (c_1-b_1)  - q([b_1+1,c_1]) = \tau_1 .
\end{eqnarray*}
}
From this, we get
\new{
\begin{eqnarray*}
\frac{q([a_1,b_1])}{(b_1-a_1+1)}  - \frac{q([b_1+1,c_1])}{(c_1-b_1)}
= \frac{\tau_1}{(b_1-a_1+1)} + \frac{\tau_1}{(c_1-b_1)}
\end{eqnarray*}
or equivalently
\begin{eqnarray*}
E\left(q, a_1, b_1, c_1 \right) = \frac{\tau_1}{(b_1-a_1+1)} + \frac{\tau_1}{(c_1-b_1)}
\end{eqnarray*}
which gives (\ref{eqn:tau1}).
}

\new{ We construct $q^{(2)}$ from $q^{(1)}$ using the same procedure. 
Recalling that the leftmost mode  of $q^{(1)}$ lies in the interval $(c_1 , n]$ 
an identical argument as above implies that
$$\dtv(q^{(2)}, q^{(1)}) \leq 2\tau_2$$
where 
$$\tau_2 =  T(q^{(1)}, a_2, b_2, c_2)$$ 
for some $a_2, b_2, c_2\in [n]$ satisfying $c_1 < a_2 \le b_2 < c_2$. Since $q^{(1)}$
is identical to $q$ in $(c_1, n]$, it follows that 
$$\tau_2 =  T(q, a_2, b_2, c_2).$$ 
\new{We continue this process iteratively for $\ell \le k$ stages until we obtain a non-decreasing distribution $q^{(\ell)}$.
(Note that we remove at least one mode in each iteration, hence it may be the case that $\ell < k$.)}
It follows inductively that for all $i \in [\ell]$, we have that $\dtv(q^{(i)}, q^{(i-1)}) \leq 2\tau_i$
where $\tau_i =  T(q, a_i, b_i, c_i)$, for $c_{i-1} < a_{i} \le b_{i} < c_{i}$.

We therefore conclude that $$\littlesum_{i=1}^{\ell} \tau_i = \littlesum_{i=1}^{\ell} T(q, a_i, b_i, c_i)$$ which is bounded from above by 
$\tau/2$ by (\ref{eqn:contrapositive}).
}
This establishes (\ref{eqn:final-goal}) 
completing the proof of Lemma~\ref{lem:tau-far-structure}.
\end{proofof}

The upper bound on the sample complexity of the algorithm is straightforward, since
only Step~1 uses samples. 

\newer{It remains to analyze the running time. The only non-trivial computation is in Step~2 where we need to decide whether 
there exist $\ell \le k$ ``ordered triples'' $\{a_i, b_i, c_i\}_{i=1}^{\ell} \in  \mathbf{s}'$ with $a_i \le b_i < c_i < a_{i+1}$, $i \in [\ell-1]$,
such that $\littlesum_{i=1}^{\ell} T(\wh{q}, a_i, b_i, c_i-1) \ge \tau/4.$
Even though a naive brute-force implementation would need time $\Omega(r^k) \cdot \log n$, 
there is a simple dynamic programming algorithm that runs in $\poly(r, k) \cdot \log n$
time.

We now provide the details. Consider the objective function
$$\mathcal{T}(\ell) = \max \left\{ \littlesum_{i=1}^{\ell} T(\wh{q}, a_i, b_i, c_i-1) \mid 
\{a_i, b_i, c_i\}_{i=1}^{\ell} \in \mathbf{s}'  \textrm{ with } a_i \le b_i < c_i < a_{i+1}, i \in [\ell-1] \right\}, $$
for $\ell \in [k].$ We want to decide whether $\max_{\ell \le k}\mathcal{T}(\ell) \ge \tau/4$. 
For $\ell \in [k]$ and $j \in [r']$, we use dynamic programming to compute the quantities 
$$\mathcal{T}(\ell, j) = \max \left\{ \littlesum_{i=1}^{\ell} T(\wh{q}, a_i, b_i, c_i-1) \mid 
\{a_i, b_i, c_i\}_{i=1}^{\ell} \in \mathbf{s}'  \textrm{ with } a_i \le b_i < c_i < a_{i+1}, i \in [\ell-1] \textrm{ and } c_{\ell} = s_j \right\}.$$
(This clearly suffices as $\mathcal{T}(\ell) = \max_{j \in [r']} \mathcal{T}(\ell, j)$.)
The dynamic program is based on the recursive identity 
$$\mathcal{T}(\ell+1, j) = \max_{j' \in [r'],  j'<j}\mathcal{T}(\ell, j') +  \mathcal{T}'(j'+1, j).$$
where we define 
$\mathcal{T}'(\alpha,\beta)  =  \max \{ T(\wh{q}, a, b, \beta) \mid  a,b \in \mathbf{s}', 
\alpha \leq a \leq b < \beta \}.$
Note that all the values $\mathcal{T}'(j'+1,j)$ (where $j',j \in [r']$ and $j'<j$) can be computed in $O(r^3)$ time. Fix $\ell \in [k]$. Suppose we have computed all
the values $\mathcal{T}(\ell, j')$, $j' \in [r']$. Then, for fixed $j \in [r']$, we can compute the value $\mathcal{T}(\ell+1, j)$ in time $O(r)$ 
using the above recursion. Hence, the total running time of the algorithm is $O(kr^2+r^3)$.
}
\noindent This completes the run time analysis and the proof of Theorem~\ref{thm: tester-correctness}.
\end{proof}

\ignore{

\section{Lower Bound} \label{sec:lb}

\rnote{Do we want this section at all, or just make the assertion in the
introduction (Proposition 1) the way we do now?}

A $\Omega(k \log(n/k)/\eps^3)$ lower bound for learning $k$-modal distributions.

I think it's clear that in order to learn a $k$-modal distribution to accuracy $\eps$
you need $\Omega(k\log(n/k)/\eps^3)$ samples.  Suppose you are promised that the distribution's $k$
modes occur at $n/k$, $2n/k$, etc, and that the weight over each of these $k$ regions is exactly
$1/k$.  In order to learn the distribution to accuracy $\eps$ you need to learn at least
half of the $k$ induced monotone distribution learning problems to accuracy at worst $2 \eps$.
Each of these problems is over an $n/k$-size domain so you need $\Omega(\log(n/k)/\eps^3)$ samples
for each one.

\rnote{To do:  make this a little less fast and loose}

}


\section{Conclusions and future work} \label{sec:conclusions}


At the level of techniques, this work illustrates the viability of
a new general strategy for developing
efficient learning algorithms, namely by using ``inexpensive'' property testers
to decompose a complex object (for us these objects are $k$-modal
distributions) into simpler objects (for us these are monotone distributions)
that can be more easily learned.  It would be interesting to apply this
paradigm in other contexts such as learning Boolean functions.

At the level of the specific problem we consider -- learning
$k$-modal distributions -- our results show that $k$-modality is a useful
type of structure which can be strongly exploited by sample-efficient and
computationally efficient learning algorithms.  Our results motivate the
study of computationally efficient learning algorithms for distributions
that satisfy other kinds of ``shape restrictions.''  Possible directions here
include multivariate $k$-modal distributions, log-concave distributions,
monotone hazard rate distributions and more. 

At a technical level, any improvement in the sample
complexity of our property testing algorithm of
Section~\ref{ssec:test} would directly improve
the ``extraneous'' additive
\new{$\tilde{O}(k^2/\eps^3)$} term in the sample complexity
of our algorithm.  We suspect that it may be possible to improve
our testing algorithm (although we note that it is easy to give an
\new{$\Omega(\sqrt{k}/\eps^2)$} lower bound using standard constructions).

Our learning algorithm is not proper, i.e., it outputs a hypothesis that is not necessarily $k$-modal.
Obtaining an efficient proper learning algorithm is an interesting question.
Finally, it should be noted that our approach for learning $k$-modal distributions requires a priori knowledge
of the parameter $k$. We leave the case of unknown $k$ as an intriguing open problem.


\section*{Acknowledgement}
We thank the anonymous reviewers for their helpful comments.

\appendix

\section{Birg{\'e}'s algorithm as a semi-agnostic learner}
\label{sec:birge-agnostic}

In this section we briefly explain why Birg{\'e}'s algorithm~\cite{Birge:87b}
also works in the semi-agnostic setting,
{thus justifying the claims about its performance made in the
statement of Theorem \ref{thm:birge-monotone}.}
To do this, we need to explain
his approach. For this, we will need the following \new{fact (which follows as a special case of the VC inequality, {Theorem~\ref{thm:vc-inequality}}}), 
which gives
a tight bound on the number of samples required to
learn an arbitrary distribution with respect to {\em total variation distance}.

\begin{fact} \label{thm:folklore}
Let $p$ be any distribution over $[n]$. We have: $\E [ \dtv ( p, \wh{p}_m) ]  =  O( \sqrt{n/m}).$
\end{fact}

Let $p$ be a non-increasing distribution over $[n]$.
(The analysis for the non-decreasing case is identical.)
Conceptually, we view algorithm $\ANI$ as working in three steps:

\begin{itemize}
\item  In the first step, it partitions the set $[n]$ into a carefully chosen
set $I_1, \ldots, I_{\ell}$ of consecutive intervals, with $\ell = O(m^{1/3} \cdot (\log n)^{2/3})$.
Consider the {\em flattened} distribution $p_f$ over $[n]$ obtained from $p$ by averaging the weight
that $p$ assigns to each interval over the entire interval. That is, for $j \in [\ell]$ and
$i \in I_j$, $p_f(i) = \sum_{t \in I_j} p(t) / |I_j|$.
Then a simple argument given in~\cite{Birge:87b} gives
that $\dtv(p_f, p) = O\left( ({\log n / (m+1)})^{1/3} \right).$

\item Let $p_r$ be the {\em reduced} distribution corresponding to $p$ and the partition $I_1, \ldots, I_{\ell}$.
That is, $p_r$ is a distribution over $[\ell]$ with $p_r(i) = p(I_i)$ for $i \in [\ell]$. In the second step, the algorithm uses the $m$ samples to learn $p_r$. (Note that $p_r$ is not necessarily monotone.)
After $m$ samples, one obtains a hypothesis $\widehat{p_r}$ such that
$\E [\dtv(p_r, \widehat{p_r})] =  O\left( \sqrt{\ell / m} \right) = O\left( ({\log n / (m+1)})^{1/3} \right)$.  The first equality follows from \new{Fact}~\ref{thm:folklore} (since $p_r$ is distribution over $\ell$ elements) and
the second inequality follows from the choice of $\ell$.

\item Finally, the algorithm outputs the flattened hypothesis $(\widehat{p_r})_f$ over $[n]$
corresponding to  $\widehat{p_r}$, i.e., obtained by $\widehat{p_r}$ by subdividing
the mass of each interval uniformly within the interval. It follows from the above two steps that
$\E[\dtv ( (\widehat{p_r})_f, p_f)] =O\left( ({\log n / (m+1)})^{1/3} \right).$

\item The combination of the first and third steps yields that
$\E[\dtv ( (\widehat{p_r})_f, p)] =O\left( ({\log n / (m+1)})^{1/3} \right).$

\end{itemize}

\medskip

The above arguments are entirely due to Birg{\'e}~\cite{Birge:87b}.
We now explain how his analysis can be extended to show that his algorithm is
in fact a semi-agnostic learner as claimed in Theorem~\ref{thm:birge-monotone}.
To avoid clutter in the expressions below let us fix
$\delta: =O\left( ({\log n / (m+1)})^{1/3} \right)$.

The second and third steps in the algorithm description
above are used to learn the distribution $p_f$ to variation distance $\delta$.
Note that these steps do not use the assumption that $p$ is non-increasing.
The following claim, which generalizes Step 1 above, says that if $p$ is $\tau$-close to non-increasing,
the flattened distribution $p_f$ (defined as above) is $(2\tau+\delta)$-close to $p$.
Therefore, it follows that, for such a distribution $p$,
algorithm $\ANI$ succeeds with expected (total variation distance) error
$(2\tau+\delta)+\delta$.

\smallskip

\noindent We have:

\begin{claim}
Let $p$ be a distribution over $[n]$ that is $\tau$-close to non-increasing.
Then, the flattened distribution $p_f$ (obtained from $p$ by averaging its
weight on every interval $I_j$) satisfies $\dtv(p_f,p) \leq (2\tau+\delta)$.
\end{claim}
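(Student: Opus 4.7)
The plan is to use the triangle inequality together with two facts: the Birg\'e flattening bound from Step 1 of the description (which applies to \emph{non-increasing} distributions), and the fact that flattening a distribution against a fixed partition does not increase $L_1$ distance. By hypothesis there exists a non-increasing distribution $q$ over $[n]$ with $\dtv(p,q) \leq \tau$. Let $q_f$ denote the flattening of $q$ with respect to the same partition $I_1,\ldots,I_\ell$ used to define $p_f$. Then
\[
\dtv(p_f,p) \;\leq\; \dtv(p_f,q_f) \;+\; \dtv(q_f,q) \;+\; \dtv(q,p),
\]
and I will bound each of these three terms separately.

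The third term is at most $\tau$ by the choice of $q$. The second term is at most $\delta$ by direct appeal to Step~1: since $q$ is non-increasing, Birg\'e's analysis gives $\dtv(q_f,q) = O(((\log n)/(m+1))^{1/3}) = \delta$ for the partition he constructs. The first term is bounded by $\tau$ using the following contraction property of flattening: for any two distributions $u,v$ on $[n]$ and any fixed partition into intervals $I_1,\ldots,I_\ell$, the flattened versions $u_f,v_f$ satisfy
\[
\|u_f - v_f\|_1 \;=\; \littlesum_{j=1}^{\ell} \littlesum_{i \in I_j} \frac{|u(I_j)-v(I_j)|}{|I_j|} \;=\; \littlesum_{j=1}^{\ell} |u(I_j)-v(I_j)| \;\leq\; \|u-v\|_1,
\]
so $\dtv(u_f,v_f) \leq \dtv(u,v)$. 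Applying this with $u=p$, $v=q$ gives $\dtv(p_f,q_f) \leq \dtv(p,q) \leq \tau$. Summing the three bounds yields $\dtv(p_f,p) \leq 2\tau + \delta$, as claimed.

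There is no real obstacle here; the only step that requires a moment of thought is verifying the contraction inequality for flattening, and the key observation is simply that averaging the signed difference $u-v$ over each interval (and multiplying back by the interval length) can only collapse cancellations across points in the same interval, never create new mass. Everything else is an immediate triangle inequality combined with Birg\'e's already-cited estimate for the approximation quality of the oblivious partition when the underlying distribution is genuinely monotone.
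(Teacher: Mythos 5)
Your proof is correct and follows essentially the same route as the paper's: both pick a non-increasing $q$ within $\tau$ of $p$, apply Birg\'e's oblivious-partition bound to get $\dtv(q_f,q)\leq\delta$, and control $\dtv(p_f,q_f)$ by observing that flattening against a fixed interval partition is an $L_1$-contraction (the paper phrases this as bounding the per-interval contribution by $|p(I_j)-q(I_j)|$ and summing). The only cosmetic difference is that you state the contraction property as a standalone inequality rather than interleaving it with the per-interval $L_1$ bookkeeping.
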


\begin{proof}
Let $p^{\downarrow}$ be the non-increasing distribution that is $\tau$-close to $p$.
Let $\tau_j$ denote the $L_1$-distance between $p$ and $p^{\downarrow}$ in the interval $I_j$.
Then, we have that
\begin{equation} \label{eqn:one}
\sum_{j=1}^{\ell} \tau_j \leq \tau.
\end{equation}

By Birg{\'e}'s arguments, it follows that the flattened distribution $(p^{\downarrow})_f$
corresponding to $p^{\downarrow}$ is $\delta$-close to $p^{\downarrow}$,
hence $(\tau+\delta)$-close to $p$. That is,
\begin{equation} \label{eqn:two}
\dtv \left( (p^{\downarrow})_f, p \right)  \leq \tau+\delta.
\end{equation}
We want to show that
\begin{equation} \label{eqn:three}
\dtv \left( (p^{\downarrow})_f, p_f \right) \leq \tau.
\end{equation}
Assuming (\ref{eqn:three}) holds, we can conclude by the triangle inequality that
$$\dtv \left(p, p_f \right) \leq 2\tau+\delta$$
as desired.

Observe that, by assumption, $p$ and $p^{\downarrow}$ have
$L_1$-distance at most $\tau_j$ in each $I_j$ interval.
In particular, this implies that, for all $j \in [\ell]$, it holds
$$ \left| p(I_j) - p^{\downarrow}(I_j) \right| \leq \tau_j.$$
Now note that, within each interval $I_j$, $p_f$ and $(p^{\downarrow})_f$ are both uniform.
Hence, the contribution of $I_j$ to the variation distance between  $p_f$ and $(p^{\downarrow})_f$
is at most $|p(I_j) - p^{\downarrow}(I_j)|$.

Therefore, by (\ref{eqn:one}) we deduce
$$\dtv (p_f , (p^{\downarrow})_f ) \leq \tau$$
which completes the proof of the claim.
\end{proof}

\section{Hypothesis Testing} \label{sec:choosehypothesis}

Our hypothesis testing routine {\tt Choose-Hypothesis}$^p$ runs a simple ``competition'' to choose a winner between two candidate hypothesis distributions $h_1$ and $h_2$  over $[n]$ that it is given in the input either explicitly, or in some succinct way.  We show that if at least one of the two candidate hypotheses is close to the target distribution $p$, then with high probability over the samples drawn from $p$ the routine selects as winner a candidate that is close to $p$.
  This basic approach of running a competition between candidate hypotheses is quite similar to the ``Scheff\'e estimate'' proposed by Devroye and Lugosi (see \cite{DL96,DL97}
and Chapter~6 of \cite{DL:01}), which in turn built closely on the work of \cite{Yatracos85}, but
there are some small differences between our approach and theirs; the \cite{DL:01} approach uses a notion of the ``competition'' between two hypotheses which is not symmetric under swapping the two competing hypotheses, whereas our competition is {symmetric}.

We now prove Theorem~\ref{thm:choosehypothesis}.

\medskip

\begin{proofof}{Theorem ~\ref{thm:choosehypothesis}}
Let ${\cal W}$ be the support of $p$. To set up the competition between $h_1$ and $h_2$, we define the following subset of ${\cal W}$:
\begin{align}
&{\cal W}_1={\cal W}_1(h_1,h_2) := \left\{w \in \mathcal{W}~\vline~h_1(w) > h_2(w) \right\}. \label{eq:W1}
\end{align}

\noindent Let then $p_1 = h_1({\cal W}_1)$ and $q_1 = h_2({\cal W}_1)$. Clearly, $p_1 > q_1$ and
 $\dtv(h_1, h_2) = p_1-q_1$.


The competition between $h_1$ and $h_2$ is carried out as follows:

\begin{enumerate}
\item  If $p_1-q_1\leq 5 \eps'$, declare a draw and return either $h_i$. Otherwise:

\item  Draw $m=O\left({\log(1/\delta') \over \eps'^2}\right)$ samples $s_1,\ldots,s_m$ from $p$, and let $\tau = {1 \over m} | \{i~|~s_i \in {\cal W}_1 \}|$ be the fraction of  samples that fall inside ${\cal W}_1.$

\item  If $\tau > p_1- {3 \over 2} \eps'$, declare $h_1$ as winner and return $h_1$; otherwise,

\item  if $\tau < q_1+ {3 \over 2} \eps'$, declare $h_2$ as winner and return $h_2$; otherwise,

\item  declare a draw and return either $h_i$.

\end{enumerate}

It is not hard to check that the outcome of the competition does not depend on the ordering of the pair of distributions provided in the input; that is, on inputs $(h_1,h_2)$ and $(h_2,h_1)$ the competition outputs the same result for a fixed sequence of samples $s_1,\ldots,s_m$ drawn from $p$.

The correctness of {\tt Choose-Hypothesis} is an immediate consequence of the following lemma.


\begin{lemma} \label{lem:kostas3}
Suppose that $\dtv(p,h_1) \leq \eps'$. Then:
\begin{itemize}

\item[(i)] If $\dtv(p, h_2)>6 \eps'$, then the probability that the competition between
$h_1$ and $h_2$ does not declare $h_1$ as the winner is at most $e^{- {m \eps'^2 /2 } }$.  (Intuitively,
if $h_2$ is very bad then it is very likely that $h_1$ will be declared winner.)

\item[(ii)]  If $\dtv(p, h_2)>4 \eps'$, the probability that the competition between $h_1$
and $h_2$ declares $h_2$ as the winner is at most $e^{- {m \eps'^2/ 2 } }$.  (Intuitively, if $h_2$
is only moderately bad then a draw is possible but it is very unlikely that $h_2$ will be declared winner.)
\end{itemize}
\end{lemma}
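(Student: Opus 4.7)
Proof plan. Set $r_1 := p(\mathcal{W}_1)$, and recall $p_1 = h_1(\mathcal{W}_1)$, $q_1 = h_2(\mathcal{W}_1)$. Since $\dtv(p,h_1)\le \eps'$, the definition of total variation distance as a maximum over subsets of $\mathcal{W}$ immediately gives $|r_1 - p_1|\le \eps'$. The empirical quantity $\tau$ is the mean of $m$ i.i.d.\ Bernoulli$(r_1)$ random variables, so by a one-sided Hoeffding bound with deviation $\eps'/2$ we get
\[
\Pr\!\left[\tau < r_1 - \tfrac{\eps'}{2}\right] \le e^{-m\eps'^2/2}, \qquad \Pr\!\left[\tau > r_1 + \tfrac{\eps'}{2}\right] \le e^{-m\eps'^2/2}.
\]
Condition on the good event $|\tau - r_1|\le \eps'/2$, which fails with probability at most $e^{-m\eps'^2/2}$ (using whichever one-sided tail is needed in each part).

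For part (i), by the triangle inequality and $\dtv(h_1,h_2) = p_1 - q_1$,
\[
p_1 - q_1 \;=\; \dtv(h_1,h_2)\;\ge\;\dtv(p,h_2) - \dtv(p,h_1)\;>\;6\eps' - \eps' \;=\; 5\eps',
\]
so the competition does not terminate at Step~1. Combining $r_1\ge p_1 - \eps'$ with the lower-tail bound gives $\tau \ge r_1 - \eps'/2 \ge p_1 - \tfrac{3}{2}\eps'$, which triggers Step~3 and returns $h_1$. Thus whenever $h_1$ is not declared winner, the good event failed, which happens with probability at most $e^{-m\eps'^2/2}$.

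For part (ii), if $p_1 - q_1 \le 5\eps'$ then Step~1 declares a draw and $h_2$ is not named winner, so there is nothing to prove. Otherwise $p_1 - q_1 > 5\eps'$, and combined with $r_1 \ge p_1 - \eps'$ this yields $r_1 > q_1 + 4\eps'$. On the good event we then get $\tau \ge r_1 - \eps'/2 > q_1 + \tfrac{7}{2}\eps' > q_1 + \tfrac{3}{2}\eps'$, so the condition in Step~4 fails and $h_2$ is not declared winner. Hence $h_2$ wins only if the good event fails, which again has probability at most $e^{-m\eps'^2/2}$.

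There is no real obstacle here; the proof is essentially a careful bookkeeping exercise in applying the triangle inequality to relate $p_1 - q_1$ to $\dtv(p,h_2)$, and then using a one-sided Hoeffding bound with the right slack ($\eps'/2$) so that the ``$\tfrac{3}{2}\eps'$'' thresholds in Steps~3--4 line up with the bounds from $|r_1 - p_1|\le \eps'$. The only subtlety is remembering that in case (ii) the Step~1 ``draw'' outcome is also acceptable, since the lemma only forbids declaring $h_2$ the winner.
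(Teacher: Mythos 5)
Your proof is correct and follows essentially the same route as the paper's: bound $|p(\mathcal{W}_1)-p_1|\leq\eps'$ via the definition of total variation distance, apply a Chernoff/Hoeffding bound to $\tau$ with slack $\eps'/2$, and then check the thresholds in Steps 1, 3, and 4 via the triangle inequality. The only (immaterial) difference is that you condition on a two-sided deviation event where only the lower tail is actually needed.
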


\begin{proof}
Let $r=p({\cal W}_1)$. The definition of the total variation distance implies that $|r-p_1| \le \eps'$. Let us define the $0/1$ (indicator) random variables $\{Z_j\}_{j=1}^m$ as $Z_j=1$ iff $s_j \in {\cal W}_1$. Clearly, $\tau={1 \over m} \sum_{j=1}^m Z_j$ and $\mathbb{E}[\tau]=\mathbb{E}[Z_j]=r$. Since the $Z_j$'s are mutually independent, it follows from the Chernoff bound that $\Pr[\tau \le r-{\eps'/2}] \le e^{- {m \eps'^2/2 } }$. Using $|r-p_1| \le \eps'$ we get that $\Pr[\tau\le p_1-{3\eps'/2}] \le e^{- {m \eps'^2/2 } }$.
\begin{itemize}
\item For part (i): If $\dtv(p, h_2) > 6 \eps'$, from the triangle inequality we get that $p_1-q_1=\dtv(h_1, h_2) > 5 \eps' $.  Hence, the algorithm will go beyond Step 1, and with probability at least $1-e^{- {m \eps'^2/2 } }$, it will stop at Step 3, declaring $h_1$ as the winner of the competition between $h_1$ and $h_2$.

\item For part (ii):  If $p_1-q_1 \leq 5 \eps'$ then the competition declares a draw, hence $h_2$ is not the winner. Otherwise we have $p_1 - q_1 > 5\eps'$ and the above arguments imply that the competition between $h_1$ and $h_2$ will declare $h_2$ as the winner with probability at most $e^{- {m \eps'^2/2 } }$.
\end{itemize}
This concludes the proof of Lemma~\ref{lem:kostas3}.
\end{proof}

The proof of the theorem is now complete.
\end{proofof}

\ignore{

We observe that the running time of {\tt Choose-Hypothesis} is linear in $m$ times the time required to
evaluate each hypothesis $H_i$ (i.e., compute the pdf) on a sample drawn from $X.$

THE ABOVE IS NOT TRUE NECESSARILY -- we need to take into account how long it takes to
compute $p_1$ and $p_2$.  A priori these are probabilities obtained by looking at $n$
points so if $H_1,H_2$ were arbitrarily complicated this could be $\Omega(n)$ time steps; for
what we get from Birge (in the sparse case) and the Binomial learner (in the binomial case), this
shouldn't be a problem.

}

\section{Using the Hypothesis Tester} \label{sec:ht}

In this section, we explain in detail how we use
the hypothesis testing algorithm {\tt Choose-Hypothesis}
throughout this paper. In particular, the algorithm
{\tt Choose-Hypothesis} is used in the following places:
\begin{itemize}

\item In Step 4 of algorithm {\tt Learn-kmodal-simple} we need
an algorithm $\ANI_{\delta'}$ (resp. $\AND_{\delta'}$) that learns
a non-increasing (resp. non-increasing) distribution within total variation distance
$\eps$ and confidence $\delta'$. Note that the corresponding algorithms
$\ANI$ and $\AND$ provided by Theorem~\ref{thm:birge-monotone} have confidence $9/10$.
To boost the confidence of $\ANI$ (resp. $\AND$)
we run the algorithm $O(\log(1/\delta'))$ times and use {\tt Choose-Hypothesis}
in an appropriate tournament procedure to select among the candidate hypothesis distributions.

\item In Step 5 of algorithm {\tt Learn-kmodal-simple} we need to
select among two candidate hypothesis distributions (with the
promise that at least one of them is close to the true
conditional distribution). In this case, we run {\tt Choose-Hypothesis} once to select between
the two candidates.

\item Also note that both algorithms {\tt Learn-kmodal-simple} and {\tt Learn-kmodal} generate
an $\eps$-accurate hypothesis with probability $9/10$. We would like to boost the probability
of success to $1-\delta$. To achieve this we again run the corresponding algorithm $O(\log(1/\delta))$
times and use {\tt Choose-Hypothesis} in an appropriate tournament
to select among the candidate hypothesis distributions.
\end{itemize}

We now formally describe the ``tournament'' algorithm to boost the confidence to $1-\delta$.

\begin{lemma} \label{lem:log-cover-size}
Let $p$ be any distribution over a finite set $\mathcal{W}$.
Suppose that ${\cal D}_\eps$ is a collection of $N$ distributions over $\mathcal{W}$ such that
there exists $q \in {\cal D}_\eps$ with $\dtv(p,q) \leq \eps$.
Then there is an algorithm that uses $O(\eps^{-2}\log N \log(1/\delta))$ samples from
$p$ and with probability $1-\delta$ outputs a distribution $p' \in {\cal D}_{\eps}$ that satisfies
$\dtv(p,p') \leq 6\eps.$
\end{lemma}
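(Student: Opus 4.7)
My plan is to organize a round-robin tournament among the $N$ candidate distributions in $\mathcal{D}_\eps$, using {\tt Choose-Hypothesis} from Theorem~\ref{thm:choosehypothesis} as the referee for each pairwise match.

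First, I will draw a single batch of $m = \Theta(\eps^{-2}\log(N/\delta))$ samples from $p$, to be reused across all $\binom{N}{2}$ pairwise matches (this sample sharing is what saves us from an unacceptable $N^2$ blowup in the sample count). For each unordered pair $\{h_i,h_j\} \subseteq \mathcal{D}_\eps$ I will invoke {\tt Choose-Hypothesis}$^p(h_i, h_j, \eps, \delta')$ with $\delta' := \delta/(3N^2)$. Theorem~\ref{thm:choosehypothesis} tells me each invocation needs only $O(\eps^{-2}\log(1/\delta')) = O(\eps^{-2}\log(N/\delta))$ samples, which is exactly what I have budgeted. A union bound over the at most $N^2/2$ matches will then show that, with probability at least $1-\delta$, every match simultaneously satisfies the conclusion of Theorem~\ref{thm:choosehypothesis}: whenever at least one of the two contestants is $\eps$-close to $p$, the declared winner is $6\eps$-close to $p$. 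I will condition on this good event in all that follows.

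The selection rule I will use is to output any candidate $h^{*}$ achieving the maximum number of match victories (with ties broken by a Scheff\'e-style rule as in Devroye–Lugosi~\cite{DL:01}). Correctness rests on the following observation: the hypothesized good distribution $q \in \mathcal{D}_\eps$, which satisfies $\dtv(q,p) \leq \eps$, must defeat every candidate $h_j$ with $\dtv(h_j,p) > 6\eps$. Indeed, if $q$ did not defeat such an $h_j$, then either $h_j$ would be declared the winner of their match (forcing $\dtv(h_j,p) \leq 6\eps$ under the good event, a contradiction) or the match would be a draw (also impossible, since Lemma~\ref{lem:kostas3}(i) with $h_1=q,h_2=h_j$ guarantees a decisive win for $q$). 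Consequently $q$ accumulates at least $|B|$ wins, where $B := \{h \in \mathcal{D}_\eps : \dtv(h,p) > 6\eps\}$, so the maximum-win candidate $h^{*}$ has at least $|B|$ wins; combined with the fact that any element of $B$ loses to every $\eps$-close candidate, the standard Scheff\'e-tournament analysis forces $h^{*} \notin B$, i.e. $\dtv(h^{*},p) \leq 6\eps$.

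The total sample complexity is $O(\eps^{-2}\log(N/\delta)) = O(\eps^{-2}(\log N + \log(1/\delta)))$, which is dominated by $O(\eps^{-2}\log N \log(1/\delta))$ as claimed. The main delicate step is the last one: verifying that the chosen tournament rule is sound in the corner case where only one candidate is $\eps$-close while several ``medium-bad'' candidates (distance strictly between $\eps$ and $6\eps$) can accumulate wins against one another. The sample-complexity bookkeeping and union bound are routine once the tournament rule is fixed.
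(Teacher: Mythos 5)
Your sample-sharing observation is correct and necessary (the paper leaves it implicit): a single batch of $m=\Theta(\eps^{-2}\log(N/\delta))$ draws suffices because each invocation of {\tt Choose-Hypothesis} only needs the empirical mass of the Scheff\'e set ${\cal W}_1(h_i,h_j)$, and the Chernoff/union-bound analysis tolerates reusing one sample across all $\binom{N}{2}$ pairs. The gap is in your selection rule. With {\tt Choose-Hypothesis} as the referee, the only matches with any guarantee are those in which at least one contestant is $\eps$-close to $p$, and the lemma's hypothesis supplies only \emph{one} such contestant, $q$. So a candidate $h\in B$ (i.e.\ with $\dtv(h,p)>6\eps$) is guaranteed to lose exactly one match --- the one against $q$ --- and can, for all the analysis knows, win all of its $N-2$ remaining matches against other bad and ``medium'' candidates, about which Theorem~\ref{thm:choosehypothesis} says nothing. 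Meanwhile $q$ is only guaranteed $|B|$ wins, since against medium candidates (distance in $(\eps,6\eps]$) Lemma~\ref{lem:kostas3} permits a draw. Whenever $|B|<N-2$ this leaves open the possibility that some $h\in B$ strictly out-scores $q$, so ``maximum number of victories'' can select a bad candidate; the appeal to ``the standard Scheff\'e-tournament analysis'' does not apply, because that analysis relies on the Devroye--Lugosi competition being decisive and satisfying a one-sided distance inequality for \emph{every} match, which the present symmetric, draw-allowing competition does not.

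The repair is to change the selection rule, which is exactly the point the paper makes in contrasting its tournament with Devroye--Lugosi: output any candidate that is \emph{never defeated} (wins or draws all of its matches), and output ``failure'' if none exists. Under your good event this rule is sound for the same two facts you already established: (i) $q$ never loses --- it wins against every $h\in B$ by Lemma~\ref{lem:kostas3}(i), is never declared a loser against medium candidates by Lemma~\ref{lem:kostas3}(ii), and draws against candidates within $4\eps$ of $p$ --- so the algorithm does not fail; and (ii) every $h\in B$ is defeated at least once (by $q$), hence is ineligible. Any surviving candidate therefore has $\dtv(\cdot,p)\le 6\eps$. With $\delta'=\Theta(\delta/N^2)$ (the paper uses $\delta/(2N)$, union-bounding only over the $O(N)$ matches involving $q$ or a fixed far candidate, which also works) the failure probability and the sample bound $O(\eps^{-2}\log(N/\delta))\le O(\eps^{-2}\log N\log(1/\delta))$ go through as you computed.
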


Devroye and Lugosi (Chapter 7 of \cite{DL:01}) prove a similar result by having all pairs
of distributions in the cover compete against each other using their notion of a competition, but again there
are some small differences:  their approach chooses a distribution in the cover which wins the
maximum number of competitions, whereas our algorithm chooses
a distribution that is never defeated (i.e., won or achieved a draw against all other distributions
in the cover).  Instead we follow the approach from \cite{DDS12stoclong}.

\medskip

\begin{proof}
The algorithm performs a tournament by running the competition
{\tt Choose-Hypothesis}$^p(h_i,h_j,\eps,$ $\delta/(2N))$ for every pair of distinct distributions
$h_i,h_j$ in the collection ${\cal D}_\eps$.  It outputs a distribution $q^{\star} \in {\cal D}_\eps$
that was never a loser (i.e., won or achieved a draw in all its competitions).
If no such distribution exists in ${\cal D}_\eps$ then the algorithm outputs
``failure.''

By definition, there exists some $q \in {\cal D}_\eps$ such that $\dtv(p,q) \leq \eps.$
We first argue that with high probability this distribution $q$ never
loses a competition against any other $q' \in {\cal D}_\eps$
(so the algorithm does not output ``failure'').  Consider any $q'
\in {\cal D}_\eps$. If $\dtv(p, q') > 4 \eps$, by Lemma~\ref{lem:kostas3}(ii) the probability that $q$ loses to
$q'$ is at most $2e^{-m \eps^2 /2} = O(1/N).$ On the other hand, if $\dtv(p, q') \leq 4 \delta$, the triangle
inequality gives that $\dtv(q,q') \leq 5 \eps$ and thus $q$ draws against $q'.$  A union
bound over all $N$ distributions in ${\cal D}_{\eps}$ shows that with probability $1-\delta/2$, the distribution $q$ never loses a competition.

We next argue that with probability at least $1-\delta/2$, every distribution $q' \in {\cal D}_\eps$ that never loses has small variation distance from $p.$ Fix a distribution $q'$ such that $\dtv(q',p) > 6 \eps$; Lemma~\ref{lem:kostas3}(i) implies that $q'$ loses to $q$ with probability $1 - 2e^{-m \eps^2 /2} \geq 1 - \delta/(2N)$.  A union bound gives that with probability $1-\delta/2$, every distribution $q'$ that has $\dtv(q', p) > 6 \eps$ loses some competition.

Thus, with overall probability at least $1-\delta$, the tournament does not output ``failure'' and outputs some distribution $q^{\star}$ such that $\dtv(p, q^{\star})$ is at most $6 \eps.$
This proves the lemma.
\end{proof}

We now explain how the above lemma is used in our context: Suppose we perform $O(\log(1/\delta))$ runs of a learning algorithm that constructs an $\eps$-accurate hypothesis with probability at least $9/10.$  Then, with failure probability at most $\delta/2$, at least one of the hypotheses generated is $\eps$-close to the true distribution in variation distance. Conditioning on this good event, we have a collection of distributions with cardinality $O(\log(1/\delta))$ that satisfies the assumption of the lemma. Hence, using
$O\left((1/\eps^2) \cdot \log\log(1/\delta) \cdot \log(1/\delta)\right) $ samples we can learn to accuracy
$6\eps$ and confidence $1-\delta/2$. The overall sample complexity is $O(\log(1/\delta))$ times the sample complexity of the 
{learning algorithm run with confidence $9/10$,} 
plus this additional $O\left((1/\eps^2) \cdot \log\log(1/\delta) \cdot \log(1/\delta)\right)$ term.

In terms of running time,we make the following easily verifiable remarks: When the hypothesis testing algorithm {\tt Choose-Hypothesis}  is run on a pair of distributions that are produced by Birg{\'e}'s algorithm, its running time is polynomial in the succinct description of these distributions, i.e., in $\log^2(n) / \eps$. Similarly, when {\tt Choose-Hypothesis} is run on a pair of outputs of {\tt Learn-kmodal-simple} or {\tt Learn-kmodal}, its running time is polynomial in the succinct description of these distributions. More specifically, in the former case, the succinct description has bit complexity $O \left( k \cdot \log^2 (n) /\eps^2 \right)$ (since the output consists of $O(k/\eps)$ monotone intervals, and the conditional distribution on each interval is the output of Birg{\'e}'s algorithm for that interval). In the latter case, the succinct description has bit complexity
$O\left( k \cdot \log^2 (n) /\eps \right)$, since the algorithm {\tt Learn-kmodal} constructs  only $k$ monotone intervals. Hence, in both cases, each executation of the testing algorithm performs  $\poly(k, \log n , 1/\eps)$ bit operations. Since the tournament invokes the algorithm {\tt Choose-Hypothesis} $O(\log^2(1/\delta))$ times (for every pair of distributions in our pool of $O(\log(1/\delta))$ candidates) the upper bound on the running time follows.

\newcommand{\etalchar}[1]{$^{#1}$}

\end{document}